\documentclass[a4paper,UKenglish,cleveref, autoref, thm-restate]{lipics-v2021}

\usepackage{dsfont}
\usepackage{xspace}
\usepackage{wrapfig}

\newcommand{\A}{{\cal A}}

\newcommand{\C}{{\cal C}}

\newcommand{\G}{{\cal G}}

\renewcommand{\H}{{\cal H}}

\renewcommand{\O}{{\cal O}}

\newcommand{\Th}{\text{Th}}

\newcommand{\oalpha}{{\overline{\alpha}}}

\newcommand{\play}{\text{play}}
\renewcommand{\path}{\text{path}}
\newcommand{\util}{\text{util}}
\newcommand{\MP}{\text{MP}}
\newcommand{\CFG}{\text{CFG}}

\newcommand{\PZ}{Player~$0$\xspace}
\newcommand{\PO}{Player~$1$\xspace}
\newcommand{\PT}{Player~$2$\xspace}
\newcommand{\PR}{Player~$3$\xspace}

\newcommand{\PLi}{Player~$i$\xspace}
\newcommand{\PLj}{Player~$j$\xspace}
\newcommand{\CoaX}{Coalition~$X$\xspace}
\newcommand{\CoaY}{Coalition~$Y$\xspace}
\newcommand{\CoaA}{Coalition~$\alpha$\xspace}

\renewcommand{\CFG}{\text{CFG}}

\usepackage{xcolor}

\newcommand{\Nat}{\mathds{N}}

\newcommand{\Z}{\mathbb{Z}}

\newcommand{\zug}[1]{\langle #1  \rangle}
\newcommand{\set}[1]{\{ #1 \}}

\newcommand{\stam}[1]{}
\newcommand{\short}[1]{}

\usepackage{tikz}
\usepackage{forest}

\usepackage{amsmath}
\usepackage{amssymb}
\usepackage{diagbox} 
\usetikzlibrary{automata, positioning, arrows.meta}

\title{Multi-Player Discrete-Bidding Games; Determinacy, Equilibria, and Complexity} 

\author{Guy Avni}{Department of Computer Science, University of Haifa, Israel \and \url{https://sites.google.com/view/gavni/home} }{gavni@cs.haifa.ac.il}{}{}

\author{Fatima Murra}{Department of Computer Science, University of Haifa, Israel}{f.m.murra@gmail.com}{}{}

\authorrunning{G. Avni and F. Murra} 

\Copyright{Guy Avni and Fatima Murra} 

\ccsdesc[500]{Theory of computation~Formal languages and automata theory}

\keywords{bidding games, multi-player games, non zero sum games, determinacy} 

\category{} 

\relatedversion{} 

\usetikzlibrary{arrows.meta, positioning, automata}

\EventEditors{John Q. Open and Joan R. Access}
\EventNoEds{2}
\EventLongTitle{42nd Conference on Very Important Topics (CVIT 2016)}
\EventShortTitle{CVIT 2016}
\EventAcronym{CVIT}
\EventYear{2016}
\EventDate{December 24--27, 2016}
\EventLocation{Little Whinging, United Kingdom}
\EventLogo{}
\SeriesVolume{42}
\ArticleNo{23}

\begin{document}

\maketitle

\begin{abstract}
Games on graphs constitute a fundamental model. Applications include {\em reactive synthesis}, which reduces to solving a zero-sum two-player game, and reasoning about multi-agent systems by modeling them as a multi-player game. We study a class of graph games called {\em bidding games} in which the players are allocated a budget, and in each turn, an auction determines which player moves the token. Two-player bidding games have been extensively studied. We study, for the first time, multi-player bidding games. We focus on discrete bidding, which imposes granularity restrictions on the players' budgets and bids. The original motivation for discrete bidding is practical applications, and technically, it is appealing that the game has only finitely-many configurations.
We initiate our study by considering a game between two coalitions of players. We show that under mild assumptions on the mechanism that is applied to break bidding ties, bidding games are {\em determined}: from every initial configuration, one of the coalitions has a winning strategy. Thus, we identify a sub-class of multi-player concurrent games that is determined. We extend the result to existence of a value in mean-payoff games. Using determinacy, we show that a pure Nash equilibrium always exists in qualitative games. 
Finally, we show that the complexity of deciding which coalition wins from a given configuration is PSPACE-hard already in reachability games and already for budgets given in unary. This is in stark contrast to two-player games, which are known to be in NP and coNP even for budgets given in binary.
\end{abstract}

\section{Introduction}
\label{sec:Introduction}
Graph games constitute a fundamental model. Theoretically, two-player zero-sum games have a deep connection to foundations of logic~\cite{Rab69} and have favorable properties; in particular, {\em turn-based} games, i.e., the players take actions alternatively, with {\em Borel} objectives are {\em determined}~\cite{Mar75}: in each game, from every position, one of the players has a winning strategy. 
In terms of applications, {\em reactive synthesis}~\cite{PR89} in which the goal is to develop a reactive system (also called a {\em controller}) that satisfies its objective in any environment, naturally reduces to the problem of finding a winning strategy for a system player against an adversarial environment. 
Taking a step further, modeling the system and environment as monolithic entities is often too crude. {\em Multi-player} graph games enable a refined modeling of a system and environment that consist of interacting components. Then, following the seminal~\cite{AHK02}, one asks: {\em can a coalition of players (e.g., the system) guarantee their objective (e.g., satisfy a specification) no matter how a coalition of the rest of the players (e.g., the environment) acts}. 
Yet another refinement arises by observing that each component has its individual objective, called {\em rational synthesis}~\cite{FKL10,KPV16,WG+16}. This interaction is captured by multi-player {\em non-zero-sum} games. {\em Nash equilibrium}~\cite{Nas50} (NE) is the basic {\em solution concept} in  games in general and graph games in particular. While Nash's theorem establishes existence of NE in {\em mixed} strategies, i.e., strategies that randomize over actions, in turn-based graph games, a NE in {\em pure} strategies always exists~\cite{Ume11}. 

\paragraph*{Bidding games}
This paper focuses on {\em bidding games}~\cite{LLPSU99,AHC19}.
So far, only two-player bidding games have been studied. We study, for the first time, multi-player bidding games. 


A bidding game is played on a graph as follows. A token is placed on an initial vertex. 
Each player has a budget and in each turn, an auction (bidding) determines which player moves the token. This generates an infinite path that determines the winner or payoff of the game. Concretely, suppose that there are $m \in \Nat$ players and the budgets are $B_1,\ldots, B_m$. The players simultaneously choose bids that do not exceed their budgets, namely $b_i \leq B_i$, for $1 \leq i \leq m$, the highest bidder moves the token and pays his bid evenly among the other players. This is called {\em Richman} rules (see other rules in~\cite{AH26}). 
We distinguish between {\em continuous} and {\em discrete} bidding. The latter is the focus of this paper and imposes a granularity restriction on the bids: we require both $B_i$ and $b_i$ to be natural numbers, and denote the total budget by $k = \sum_{1 \leq i \leq m} B_i$. 
We discuss bidding ties and rounding of payments below. 
Observe that $k$ stays constant throughout the game. 

\noindent{\bf Two-player bidding games.} We survey the relevant literature. 
The central question in bidding games regards the {\em threshold} budget; intuitively, a budget that is both necessary and sufficient to guarantee an objective. 
In continuous-bidding games, it was shown in~\cite{LLPSU99,AHC19} that each vertex $v$ has a threshold $\Th(v)$ such that, for every $\epsilon > 0$, if \PO's budget is $\Th(v) + \epsilon$, he has a winning strategy, and if \PO's budget is $\Th(v) - \epsilon$, \PT has a winning strategy. Moreover, $\Th(v)$ has the following structure, which we call the {\em average property}: let $v^-$ and $v^+$ be the neighbors of $v$ that respectively achieve the minimal and maximal thresholds, then $\Th(v) = 0.5 \cdot \big(\Th(v^-) + \Th(v^+)\big)$. In fact, the proof for {\em reachability games}, i.e., one of the players wins if a target vertex is reached, proceeds by establishing the average property in games on trees, then extending to general graphs~\cite{LLPSU99}. 

We turn to two-player discrete-bidding games, which have been studied using two techniques. (1)~Establishing a {\em discrete average property}. This mirrors the technique in continuous-bidding games; roughly, the threshold is a rounded average of the thresholds in two neighbors. 
(2)~{\em Local vs global determinacy}~\cite{AAH21}; reasons about the possible outcomes of a bidding and establishes that (locally) one of the players can force staying in a winning region, which leads to (global) determinacy. The framework was studied in concurrent games in general~\cite{BBR21}. 
The discrete-average property for reachability games as shown in~\cite{DP10}. For infinite-duration objectives, existence was shown using the second technique~\cite{AAH21}, but only recently the average property was shown for {\em parity} objectives~\cite{AS25} and for {\em mean-payoff} objectives~\cite{AS26}. 
Establishing the average property has advantages over the other technique; succinctly-represented optimal strategies can be constructed, which leads to improved complexity bounds, namely deciding the winner is in NP and coNP even when $k$ is given in binary (i.e., the game has exponential-many underlying {\em configurations}).

\noindent{\bf Multi-player bidding games.}
We start our study with the basic question on multi-player games, adapted from the seminal~\cite{AHK02}: {\em Can a coalition of players force winning the game?} 
In more detail, consider a configuration $c = \zug{v, B_1, \ldots, B_m}$, which means that the token is placed on $v$ and \PLi's budget is $B_i$, for $1 \leq i \leq m$. Suppose that the players $\set{1,\ldots,m}$ are partitioned into two coalitions $X$ and $Y$. We ask: can Coalition~$X$ guarantee an objective from $c$ no matter how Coalition~$Y$ plays? The following example questions the applicability of a technique based on thresholds in multi-player games. 
\begin{example}
\label{ex:no-thresholds}
Let $k = 11$, and $X = \set{1}$ and $Y = \set{2,3}$. Consider the game depicted in Fig.~\ref{fig:win-once}; intuitively, \CoaX needs to win one of the first two biddings. Fix \PO's budget to $B_1 = 3$. We show that winning depends on the allocation to the other players. On the one hand, \CoaY wins when the allocation is uniform; from $\zug{v_0, 3, 4, 4}$ \PT bids $b_2 = 4$ leading to $\zug{v_1, 5, 0, 6}$, then \PR bids $b_3 = 6$ to win the game. Second, \CoaX wins from $\zug{v_0, 3, 8, 0}$. Indeed, even if \PT wins the first bidding with $b_2 = 4$, the game reaches  $\zug{v_1, 5, 4, 2}$. But then \PO wins the bidding with $b_1 = 5$ and draws the game to $t$. A similar argument shows that \CoaX wins from $\zug{v_0, 3, 0, 8}$. Thus, the set of winning configurations for \CoaX is not convex!
\end{example}
\label{ex:intro1}
\begin{figure}[ht]
\begin{minipage}{0.5\textwidth}
\centering
\resizebox{!}{2cm}{%
\begin{tikzpicture}[
    >=Stealth,
    node distance=1cm and 2cm,
    state/.style={circle, draw, minimum size=2em, font=\large\bfseries},
    accepting/.style={double, double distance=2pt},
    thick,
]

    \node[state] (v0) at (0, 0) {$v_0$};
    \node[state] (v1) [right=1.75cm of v0] {$v_1$};
    \node[state] (s)  [right=1.75cm of v1] {$s$};
    \node[state] (t)  [right=1.75cm of s, accepting] {$t$};

    \draw[->] (-1, 0) -- (v0); 
    \draw[->] (v0) -- (v1);
    \draw[->] (v1) -- (s);
    \draw[->] (v1) to[out=35, in=180] (t);
    \draw[->] (v0) to[out=-35, in=185] (t); 
\end{tikzpicture}%
}
\caption{\CoaX's target is $t$ and \CoaY wins if $s$ is reached.}
\label{fig:win-once}
\end{minipage}%
\begin{minipage}{0.5\textwidth}%
    \centering
    \begin{tikzpicture}[
        scale=1.0,
        vertex/.style={circle, draw, minimum size=0.65cm, font=\small, thick},
        arrow/.style={->, >=Stealth, thick}
    ]

        \node[vertex] (v2) at (0,0) {$v_2$};   
        \node[vertex] (v1) at (1.5,0) {$v_1$};   
        \node[vertex] (v0) at (3.0,0) {$v_0$};   
        \node[vertex] (v_r) at (4.5,0) {$v_4$};  
        \node[vertex] (v3) at (6.0,0) {$v_3$};   

        \node[below=0.05cm of v0, font=\small] {$(1,2,3)$};
        \node[below=0.05cm of v3, font=\small] {$(2,3,1)$};
        \node[below=0.05cm of v2, font=\small] {$(3,2,1)$};

        \draw[arrow, thin] (3.0,0.8) -- (v0.north);

        \draw[arrow] (v0) -- (v1);
        \draw[arrow] (v0) -- (v_r);
        \draw[arrow] (v1) -- (v2);
        \draw[arrow] (v_r) -- (v3);

        \draw[arrow] (v2) to[bend left=30] (v3);
        \draw[arrow] (v3) to[bend left=30] (v2);

        \draw[arrow] (v2) to[out=135, in=225, looseness=4] (v2);
        \draw[arrow] (v3) to[out=45, in=315, looseness=4] (v3);

    \end{tikzpicture}
    \caption{A three player game.}
    \label{fig:ABS}
    \end{minipage}
\end{figure}

In fact, a variant of Ex.~\ref{ex:intro1} has been found for continuous bidding, and cut short any research on multi-player bidding games. Recent results on discrete-bidding games motivate us to approach this model under discrete-bidding. Furthermore, fortunately, unlike continuous-bidding games, a second reasoning technique is available for discrete-bidding games. Before presenting our results, we show that, as is the case in two-player games, determinacy holds only for certain tie-breaking mechanisms. 

\begin{example}
Consider the simple game depicted in Fig.~\ref{fig:example2} (a) with four players and a total budget of $k=4$. 
The tie-breaking mechanism is presented in Fig.~\ref{fig:example2} (b); the first entry in each row states the players that choose the highest bid, and the second entry states which player wins the bidding. 
Consider the initial configuration $c_0 = \zug{v_0, 1, 1, 1, 1}$ with coalitions $X = \set{1,2}$ and $Y = \set{3,4}$. 
\CoaX's target is $t$. 

We claim that neither coalition has a winning strategy. 
Fig.~\ref{fig:example2} (c) depicts the {\em bidding matrix} for configuration $c_0$; it captures all possible outcomes of the bidding at $c_0$. 
Each entry corresponds to a pair of joint coalition bids and indicates the coalition that wins the game. 
For example, when \CoaX bids $(0,1)$ and \CoaY bids $(1,0)$ a tie occurs between Players~$2$ and~$3$, \PT wins the tie, proceeds to $t$, and \CoaX wins the game. 
Observe that the bidding matrix neither has an $X$-row, a row all of whose entry are $X$, nor a $Y$-column, a column all of whose entries are $Y$, thus neither coalition has a winning strategy. 
\end{example}

\stam{
OLD
As shown in Figure 1, the game is not determined when non-linear tie-breaking is used. The game in Figure 1 consists of four players, with initial budget 1, divided into two coalitions, $A$ and $B$. The objective for Coalition $A$ is to visit vertex t infinitely often, while Coalition $B$ aims to prevent them from doing so. T is a tie-breaking mechanism that defined as: T(3,2)$\to$2, T(2,4)$\to$4, T(1,4)$\to$1, T(1,3)$\to$3, T(1,2,3,4)$\to$1. This game is not determined as we can see in Figure 2, for every possible strategy of Coalition $A$ from vertex $v$, there exists a winning strategy of Coalition $B$. The same holds true for the inverse case. It follows that neither coalition has a winning strategy from $v$.
\end{example}

We extend the local vs global determinacy technique to multi-player bidding games and establish determinacy. Thus, we provide positive results on the set of winning configurations, even though they are not necessarily convex. We also study mean-payoff multi-player bidding games and show that a value always exists. We study, for the first time, non-zero-sum multi-player bidding games and show existence of NE in pure strategies.

TODO: move the example to intro. In the definitions, we already use linear tie-breaking and this is a good motivation for the definitions. 


}

\begin{figure}[htbp]
  \centering
  
  \begin{subfigure}[b]{0.3\textwidth}
    \centering
    \captionsetup{width=0.8\textwidth}
    \begin{tikzpicture}[
      shorten >=1pt, 
      node distance=1.5, 
      on grid, 
      auto, 
      >=Stealth,
      every state/.style={draw=black, fill=white, minimum size=7mm, font=\small}
    ]
      \node[state, initial, initial where=above left, initial text=] (v) {$v_0$};
      \node[state] (left) [below left=of v] {};
      \node[state, accepting] (t) [below right=of v] {$t$};
      
      \path[->] 
        (v) edge node {} (left)
        (v) edge node {} (t)
        (left) edge [loop left] node {} (left)
        (t) edge [loop right] node {} (t);
    \end{tikzpicture}
    \caption{A bidding game.}
    \label{subfig:game}
  \end{subfigure}
  \hfill
  \begin{subfigure}[b]{0.34\textwidth}
    \centering
    \scriptsize 
    \setlength{\tabcolsep}{2pt} 
    \renewcommand{\arraystretch}{1.3} 
    \captionsetup{width=0.8\textwidth}
    \begin{tabular}{|c|c||c|c|}
      \hline
      \textbf{\scriptsize Tie} & \textbf{\scriptsize Winner} & \textbf{\scriptsize Tie} & \textbf{\scriptsize Winner} \\ \hline
      1, 2 & 1 & 1, 2, 3 & 3 \\ \hline
      1, 3 & 3 & 1, 2, 4 & 4 \\ \hline
      1, 4 & 1 & 1, 3, 4 & 1 \\ \hline
      2, 3 & 2 & 2, 3, 4 & 2 \\ \hline
      2, 4 & 4 & 1, 2, 3, 4 & 1 \\ \hline
      3, 4 & 3 & & \\ \hline 
    \end{tabular}
    \caption{A non-linear mechanism.} 
    \label{subfig:ties}
  \end{subfigure}
  \hfill
  \begin{subfigure}[b]{0.32\textwidth}
    \centering
    \small
    \renewcommand{\arraystretch}{0.95}
    \captionsetup{width=0.8\textwidth}
    \begin{tabular}{|c|c|c|c|c|}
      \hline
      \diagbox[width=3.9em, height=2.3em]{\scriptsize $b_3, b_4$}{\scriptsize $b_1, b_2$} & \textbf{\small 0,0} & \textbf{\small 0,1} & \textbf{\small 1,0} & \textbf{\small 1,1} \\ \hline
      \textbf{0,0} & X & X & X & X \\ \hline
      \textbf{0,1} & Y & Y & X & Y \\ \hline
      \textbf{1,0} & Y & X & Y & Y \\ \hline
      \textbf{1,1} & Y & X & X & X \\ \hline
    \end{tabular}
    \caption{All outcomes at $c_0$.}
    \label{subfig:matrix}
  \end{subfigure}

  \vspace{10pt}
  \caption{From configuration $c_0 = \zug{v_0, 1, 1, 1, 1}$, with coalitions $X = \set{1,2}$ and $Y = \set{3,4}$, and \CoaX's target is $t$, neither coalition has a winning strategy; there is no $X$-row nor a $Y$-column.}
  \label{fig:example2}
\end{figure}
 
\noindent{\bf Our results.}
We consider $m$-player games in which both tie-breaking and rounding mechanisms are {\em linear}: each vertex $v$ is associated with two orders $T_v$ and $R_v$, among the highest bidders, the player who is highest according to $T_v$, wins the bidding and assuming the winning bid is $b = (m-1) \cdot a + r$, then $a$ coins are paid to each player, and the remaining $r$ coins are given to the top $r$ players according to $R_v$. We adapt the framework of local to global determinacy to multi-player games to show that multi-player discrete-bidding games are determined: from every initial configuration, one of the two coalitions has a winning strategy. 

We study the complexity of deciding which coalition wins a from an initial configuration. We show that the problem is PSPACE-hard already for reachability objectives and even when the budgets are given in unary. This is in stark contrast to two-player games, where deciding the winner is in NP and coNP even when the budgets are given in binary~\cite{AS25}. That is, both games have exponential-many configurations. 
In fact, lower bounds are generally scarce in bidding games and quite weak: two-player discrete-bidding games for a class of objectives $\Gamma$ are only known to be harder than turn-based games with objective $\Gamma$: e.g., two-player reachability discrete-bidding games are only known to be P-hard.


We study, for the first time, non-zero-sum multi-player bidding games. Previously, zero-sum bidding games were only studied with two players and for games played on trees (and under continuous-bidding)~\cite{MKT18}. We show existence of Nash equilibrium with pure strategies. Finally, we initiate the study of multi-player mean-payoff games and show existence of a value in these games.

\paragraph*{Application; Auction-based scheduling.} 
We describe a direct application of our work in {\em decoupled multi-objective planning}. Given a graph $G$ and an objective $\varphi = \varphi_1\wedge \ldots \wedge \varphi_m$, the goal is to find a path in $G$ that satisfies $\varphi$. 
A decoupled approach proceeds by independently constructing a policy $f_i$ to each property $\varphi_i$, then composing $f_1,\ldots, f_m$ at runtime to produce a path that satisfies all. Advantages of decoupling include parallel construction of policies and reusability as we demonstrate below. 

The challenge is to design the policies as independently as possible while still providing a guarantee that their composition satisfies all objectives. 
In the recent~\cite{AH+26}, a scheduler randomly chooses a policy to act in each turn, and {\em almost-sure} satisfaction is guaranteed.  
In {\em auction-based scheduling} (ABS)~\cite{AMS24}, each policy is allocated a budget and in each turn, an auction among the policies determines which policy acts. 
Policy $f_i$ is constructed by finding a winning strategy in a bidding game with objective $\varphi_i$. Since so far, multi-player bidding games have not been considered, ABS was confined to decoupling of $m=2$ properties. Our work opens the door for decoupling $m >2$ policies as we demonstrate below. 

\begin{example}[Multi-objective auction-based scheduling]
Consider the graph that is depicted in Fig.~\ref{fig:ABS} and the specification $\varphi = \varphi_1 \wedge \varphi_2 \wedge \varphi_3$, with $\varphi_1 = F v_1$, $\varphi_2 = GF v_2$, and $\varphi_3 = GF v_3$. We apply ABS as follows. Set $k=3$, an initial configuration $c_0 = \zug{v_0, 1,1,1}$, the same rounding mechanism $R_v =(2,3,1)$ in every vertex $v$, and the tie-breaking mechanism is depicted below the vertices. Consider three zero-sum games: for $i = 1,2,3$, let $\G^i$ be a game between Coalition~$X = \set{i}$ and Coalition~$Y = \set{1,2,3} \setminus \set{i}$, where the goal of \CoaX is $\varphi_i$. 
We claim that \CoaX wins in all three games; let $f^X_i$ be a winning strategy in $\G^i$. In $\G^1$, $f^X_1$ bids $1$, wins the first bidding due to tie-breaking, and moves left to satisfy $\varphi_1$. For $i=2,3$, observe that due to rounding, \PO is never paid, and due to tie-breaking, \PLi cannot force staying at $v_i$. 
Then, the play that is generated by composing $f^X_1,f^X_2$, and $f^X_3$ is guaranteed to satisfy $\varphi$. 

We point to advantages. First, reusability: if $\varphi_1$ changes to $\varphi'_1 = F v_4$, i.e., the target moves, only $f^X_1$ needs to be change and the $f^X_2$ and $f^X_3$ can stay fixed. Second, applying different specialized algorithms to each task: in this case, $f^X_1$ was constructed by solving a reachability and $f^X_2$ and $f^X_3$ by solving a more complicated B\"uchi game. 
\end{example}


\stam{
\subsection{Related work}
Local vs global determinacy: \cite{BBR21}
poorman: \cite{AM+23}
non-zero sum: \cite{MKT18}

Maybe: \url{https://academic.oup.com/logcom/article-abstract/12/1/149/1064211?redirectedFrom=fulltext}
}
\stam{
---OLD---

Graph games proceed by placing a token on a vertex in the graph, which the players move throughout the graph to produce an infinite path $\pi$. The winner is determined according to $\pi$. In reachability bidding games as an example, the objective of Player 1 is to reach a target vertex t, and the objective of Player 2 is to avoid t. In these games, Player 1 wins the game iff the path $\pi$ reaches vertex t. In bidding games, according to Richman rule, each player has a budget, and before each move, the players simultaneously submit bids, where a bid is legal if it does not exceed the available budget. The player who bids higher wins the bidding, pays the bid to other player, and moves the token. In discrete bidding games the granularity of the bids is restricted to be natural numbers. 

\paragraph*{}
We study, for the first time, bidding games with multiple players. We focus on discrete bidding, which have the theoretical benefit that the model is finite and thus technically accessible. Moreover, discrete bidding has practical motivation; any practical application requires some granularity restrictions. We describe a concrete application of our work: Auction-based scheduling (ABS) that introduced in \cite{AMS24}. ABS is an application of bidding games. Before our study, since only two-player bidding games had been solved, ABS was restricted to a conjunction of just two objectives. Our work opens the door for applying ABS to other scenarios involving multiple objectives.

\paragraph*{}
 A game is determined if exactly one player can guarantee winning the game. We say that game is determined if one player has winning strategy - a strategy that a player can reveal before the other player, and still win the game. Determinacy is the basic property of a zero-sum game that one seeks to establish. Determinacy in turn-based games was established by Martin for a wide range of objective (Borel winning sets) \cite{Mar75}. Bidding games, however, are a subclass of concurrent games, thus determinacy is not trivial and somewhat surprising. For two-player bidding games, determinacy depends on the mechanism used to break bidding ties \cite{latex2e}. For multi-player games, we adopt the solution concept from the seminal \cite{AHK02} in which the players form two opposing coalitions. Our preliminary results show that multi-player discrete-bidding games are detemined.
}

\section{Preliminaries}
We formalize the semantics of a bidding game by describing the explicit {\em concurrent game} that corresponds to it. We start by defining concurrent games. For $n \in \Nat$, we use the notation $[n] = \set{1,\ldots, n}$.

\paragraph*{Concurrent games}
A multi-player concurrent game is intuitively played as follows. A token is placed on an initial vertex, and in each turn, the players simultaneously select actions and the joint vector of actions determines the vertex that the token moves to. Formally, for $m \in \Nat$, an $m$-player concurrent game is played on arena $\zug{m, A,V, \lambda, \delta}$, where $A$ is a set of actions, $V$ is a set of vertices, $\lambda: V \times [m] \to 2^A$ assigns a set of actions that is allowed to each player in each vertex, and $\delta: V \times A^m \to V$ is a deterministic transition function that takes an action from each player and specifies the next vertex. For $v \in V$, we use $N(v)$ to denote the {\em neighbors} of $v$, defined as $N(v) = \set{v': \exists a \in A^m \text{ s.t. } \delta(v, a) = v'}$. 

We say that a player {\em controls} a vertex $v$ if their actions uniquely determine the successor vertex. For ease of presentation, restrict to \PO; he controls $v$ if for every $v' \in V$, there is an action $a_1$ such that for all $a_2,\ldots, a_m$ we have $\delta(v, (a_1,\ldots, a_m)) = v'$. A {\em turn-based} game is a special case of a concurrent game in which every vertex is controlled by a player. Note that a concurrent game could have vertices that are controlled by players and some that are not. 

\noindent{\bf Semantics.} A {\em history} records the history of vertices traversed and the actions chosen by the players. We use $\H$ to denote the set of histories, thus $\H = V \cdot (A^m \cdot V)^*$. For $i \in [m]$, a {\em strategy} for  \PLi is a function $\sigma_i: \H \rightarrow A$. A collection of strategies $\sigma_1,\ldots, \sigma_m$ and an initial vertex $v_0 \in V$ gives rise to a unique {\em play}, denoted $\play(v_0, \sigma_1,\ldots, \sigma_m) = v_0, \overline{a}_1, v_1, \overline{a}_2,\ldots$, which is defined inductively as follows. The initial vertex is $v_0$. Assume $h \in \H$ has been defined and ends in vertex $v_n$. Then, the next joint action is $\overline{a}_{n+1} = \zug{\sigma_1(h), \ldots, \sigma_m(h)}$ and the next vertex is $v_{n+1} = \delta(v_n, \overline{a}_{n+1})$. The {\em path} that the play traverses is $\path(v_0, \sigma_1,\ldots, \sigma_m) = v_0, v_1, \ldots$.
We restrict attention to strategies that select legal actions, namely $\sigma_i(h) \in \lambda(v, i)$, for every history $h$ that ends in $v$ and $i \in [m]$. 

We extend the definition to strategies for a {\em coalition} of players. Let $X \subseteq [m]$ be a coalition of players. A joint strategy $\sigma^X: \H \rightarrow A^X$ for Coalition~$X$ specifies an action for each player in the coalition following a history. We can think of $\sigma^X$ as a collection of $|X|$ strategies for the members of $X$ that {\em agree} with $\sigma^X$. 
Let $Y = [m] \setminus X$. We use $\play(v_0, \sigma^X, \sigma^Y)$ to denote the play that is obtained when the members of $X$ follows strategies that agree with $\sigma^X$ and the members of $Y$ follow strategies that agree with $\sigma^Y$. We define $\path(v_0, \sigma^X, \sigma^Y)$ similarly.

\paragraph*{Objectives and determinacy}
An {\em objective} determines which infinite paths are winning for a coalition $X \subseteq [m]$. Formally, it is $\O \subseteq V^\omega$ and we focus on the standard $\omega$-regular objectives. For a path $\pi \in V^\omega$, we use $\inf(\pi)$ to denote the vertices that $\pi$ visits infinitely often. 
\begin{description}
\item[Reachability.] There is a collection of {\em target} vertices $T \subseteq V$ and a path $\pi \in V^\omega$ is winning for Coalition~$X$ iff $\pi$ visits $T$. 
\item[Parity (``max odd'').] Each vertex is associated with a parity index, given by $p: V \rightarrow \set{1,\ldots,d}$. 
For a path $\pi=v_0,v_1,...$, denote the vertices that $\pi$ visits infinity often by $\inf(\pi)$. \CoaX wins $\pi$ iff $\max\{ p(v):v\in \inf(\pi)\}$ is odd. 
\item[Muller.] Is given by subsets $S_1,\ldots, S_n \subseteq V$ of vertices and \CoaX wins a path $\pi$ iff there exists $1 \leq j \leq n$ such that $\inf(\pi) = S_j$.
\end{description}

Following~\cite{AHK02}, we study a multi-player game in which the players are partitioned among two coalitions that play against each other. 

\begin{definition}[Determinacy]
\label{def:det}
Consider a multi-player concurrent game $\G = \zug{m, A,V, \lambda, \delta, \O}$ and an initial vertex $v_0$. Consider a partition $X \cup Y$ of $[m]$. A winning strategy for \CoaX is $\sigma^X$ such that for every strategy $\sigma^Y$ of \CoaY, we have $\path(v_0, \sigma^X, \sigma^Y) \in \O$. Game $\G$ is {\em determined} if for every partition $X \cup Y$ of $[m]$ and every initial vertex $v_0$, either \CoaX has a winning strategy for objective $\O$ or \CoaY has a winning strategy that violates $\O$. 
\end{definition}

It is well-known that turn-based games are determined. 

\begin{theorem}
\cite{Mar75} Two-player turn-based games with M\"uller objectives are determined. 
\end{theorem}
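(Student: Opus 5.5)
This theorem is the classical result of Martin, cited from~\cite{Mar75}, and the paper uses it as a black box. Still, for the finite-arena, $\omega$-regular setting that matters here, I would prove it directly. The plan is to reduce Muller games to parity games and then prove determinacy of parity games by induction on the number of priorities, using attractor constructions.

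\textbf{Step 1 (reduction).} Given a turn-based arena on $V$ with a Muller condition $S_1,\ldots,S_n$, build the product with the latest appearance record (LAR) structure. States of the product are pairs $(v,\ell)$, where $\ell$ is a permutation of $V$ together with a hit position. The update moves the current vertex to the front. A priority is assigned from the hit position and from whether the set of vertices at or before the hit position equals some $S_j$.

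The product is again turn-based, since its memory update is deterministic. A play of the product satisfies the parity condition iff its projection satisfies the Muller condition. Strategies transfer in both directions: a strategy in the product projects to a strategy using the LAR as memory, and conversely a strategy in the original game lifts by simulating the LAR. Hence determinacy of the parity product yields determinacy of the Muller game.

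\textbf{Step 2 (parity determinacy, Zielonka's induction).} Induct on the number of vertices, or equivalently on the priorities. Let $d$ be the maximal priority and let $i \in \{0,1\}$ be the player it favours. Let $A$ be Player $i$'s attractor to the vertices of priority $d$. The complement $V\setminus A$ is a trap for Player $i$, so it forms a subgame, and by induction that subgame is partitioned into winning regions $W_i'$ and $W_{1-i}'$.

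\begin{itemize}
\item If $W_{1-i}'=\emptyset$, Player $i$ wins everywhere. Inside $V\setminus A$ he plays his subgame strategy. Whenever the play enters $A$, he attracts to priority $d$. Either the play eventually stays in the subgame and is won there, or priority $d$ is seen infinitely often.
\item Otherwise, let $B$ be Player $(1-i)$'s attractor to $W_{1-i}'$. Player $1-i$ wins from $B$: inside $W_{1-i}'$ Player $i$ cannot escape, because $V\setminus A$ is an $i$-trap and $W_{1-i}'$ is a trap within the subgame. Now recurse on $V\setminus B$, which is a $(1-i)$-trap and is strictly smaller. Combine the recursive winning regions with $B$.
\end{itemize}

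\textbf{Main obstacle.} The delicate points are the trap arguments, namely that strategies restricted to subgames remain legal and winning in the full arena, and the correctness of the LAR priority assignment in Step 1. I expect the LAR correctness to be the hardest part. The key fact I would prove first is the following: eventually the vertices in $\inf(\pi)$ occupy exactly the front $|\inf(\pi)|$ positions of the record. Moreover, the hit position reaches $|\inf(\pi)|$ infinitely often and exceeds it only finitely often. Consequently, the maximal priority seen infinitely often encodes whether $\inf(\pi)=S_j$ for some $j$.
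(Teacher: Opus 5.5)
Your argument is correct, but it takes a different route from the paper. The paper gives no proof at all; it invokes Martin's Borel determinacy theorem as a black box. You instead prove only the case the paper actually needs, namely finite arenas with Muller objectives, via the latest-appearance-record product and Zielonka's attractor induction.

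Both steps go through as you sketch them. The LAR update is a deterministic function of the history, so plays and strategies of the product are in bijection with those of the original arena. Your key fact about the record then yields the right encoding for the paper's max-odd convention, provided a hit at position $h$ gets priority $2h+1$ when the first $h$ entries form some $S_j$, and $2h$ otherwise.

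One small correction: the Zielonka recursion must be measured by the number of vertices (or lexicographically by priorities and then vertices), not by priorities alone. The reason is that $V\setminus B$ may still contain vertices of priority $d$. Your remark that $V\setminus B$ is strictly smaller is exactly the measure that works, so "equivalently on the priorities" should be dropped.

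As for the trade-off: Martin's theorem covers arbitrary Borel objectives on arbitrary arenas, but it is deep and non-constructive. Your proof is elementary and self-contained, and it yields an algorithm together with finite-memory (LAR-based) winning strategies. It also suffices for every use in the paper. The arenas involved ($\G_X$, $\G_Y$, the configuration graph, and the cycle-forming tree) are all finite. Moreover, a Muller condition over $V$ lifted to these arenas is again a Muller condition: take all vertex sets whose projection to $V$ is some $S_j$.
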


\paragraph*{Determinacy via turn-based games}
We present an alternative way to view determinacy, which is often easier to work with. Consider a multi-player concurrent game $\G = \zug{m, A,V, \lambda, \delta, \O}$ and a partition $X \cup Y$ of $[m]$. Construct two turn-based two-player games $\G_X$ and $\G_Y$ between Players~$X$ and~$Y$. Let $\alpha \in \set{X,Y}$ and denote by $\oalpha$ the other coalition. A player's action in $\G_\alpha$ correspond to a joint action of the corresponding coalition. Intuitively, in game $\G_\alpha$, Player~$\alpha$ is disadvantaged: in each turn, he needs to reveal his choice of action before the other player. Formally, the vertices of $\G_\alpha$ are $V \cup (V \times A^\alpha)$, where the latter are called {\em intermediate vertices}. Player~$\alpha$ controls vertices in $V$ and chooses a joint action $a_\alpha \in A^\alpha$. The game proceeds to a Player~$\oalpha$ controlled vertex $\zug{v, a_\alpha}$, from which she chooses $a_\oalpha$, then the game proceeds to $\delta(v, a_\alpha, a_\oalpha)$. 

Since $\G_X$ and $\G_Y$ are turn-based games, one of the players has a winning strategy. It is not hard to see that Player~$X$ wins $\G_X$ iff \CoaX wins $\G$, and Player~$Y$ wins $\G_Y$ iff \CoaY wins $\G$. Note that since $\G$ is concurrent, it is not necessarily determined, meaning that neither player has a winning strategy, which happens when Player~$Y$ wins $\G_X$ and Player~$X$ wins $\G_Y$. 

\begin{lemma}
\label{lem:TB-det}
A concurrent game $\G$ is determined iff either Player~$X$ wins $\G_X$ or Player~$Y$ wins $\G_Y$. 
\end{lemma}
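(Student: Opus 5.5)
The statement is almost a direct consequence of Definition~\ref{def:det} once the two correspondences claimed in the text are established. These are: (i) Player~$X$ wins $\G_X$ from $v_0$ iff \CoaX has a winning strategy in $\G$ from $v_0$, and (ii) Player~$Y$ wins $\G_Y$ from $v_0$ iff \CoaY has a winning strategy in $\G$ from $v_0$. Given (i) and (ii), the statement ``either \CoaX wins or \CoaY wins'' (for the fixed partition and initial vertex) is literally equivalent to ``either Player~$X$ wins $\G_X$ or Player~$Y$ wins $\G_Y$''. Quantifying over all partitions and initial vertices then gives the lemma. So the plan is to prove (i); (ii) is symmetric with the roles of $X$ and $Y$ swapped.

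For (i), I would build explicit translations of strategies and plays. A history $h = v_0, \overline{a}_1, v_1, \ldots, v_n$ of $\G$ corresponds bijectively to the history of $\G_X$ that passes $v_0, \zug{v_0, a^X_1}, v_1, \zug{v_1,a^X_2}, \ldots, v_n$. Here $a^X_j$ is the $X$-part of $\overline{a}_j$, and the $Y$-part is the move taken at the intermediate vertex. Under this correspondence, the projection of a $\G_X$-play onto $V$ (deleting intermediate vertices) is exactly the path in $\G$. Since the objectives are prefix-independent or reachability-type, I would define the objective on $\G_X$ as the set of plays whose projection lies in $\O$. For Muller objectives this is again a Muller condition on the enlarged arena: take the sets $S$ whose projection onto $V$ equals some $S_j$. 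Hence $\G_X$ falls under \cite{Mar75}.

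Now for (i), direction ($\Leftarrow$): given a winning $\sigma^X$ in $\G$, let Player~$X$ in $\G_X$ play $\sigma^X$ applied to the translated history. Any Player~$Y$ strategy in $\G_X$ induces a coalition strategy $\sigma^Y$ in $\G$, because Player~$Y$'s choice at $\zug{v,a_X}$ may depend on $a_X$, but $a_X$ is itself determined by the history via $\sigma^X$. The resulting path therefore equals $\path(v_0,\sigma^X,\sigma^Y)\in\O$. Direction ($\Rightarrow$): a winning Player~$X$ strategy in $\G_X$ only sees histories at $V$-vertices, which translate back, so it defines $\sigma^X$ in $\G$. Against any $\sigma^Y$, the corresponding Player~$Y$ strategy in $\G_X$ plays $\sigma^Y(h)$ at $\zug{v,a_X}$ and ignores $a_X$. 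The two plays project to the same path, which is winning.

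The only point needing care is the ($\Leftarrow$) translation of Player~$Y$'s strategy, which sees $a_X$. This is the step I expect to be the main subtlety, and it is resolved by the observation above that against a fixed $\sigma^X$ this information adds nothing. The remaining work is bookkeeping of the history bijection. Note that no determinacy of $\G_X$ itself is needed for the equivalence; Martin's theorem is only used in the surrounding discussion.
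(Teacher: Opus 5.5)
Your proposal is correct and takes the same route as the paper. The paper simply asserts that Player~$X$ wins $\G_X$ iff \CoaX wins $\G$ (and symmetrically for $Y$), then reads the lemma off Definition~\ref{def:det}; your explicit history translation, including the point that Player~$Y$'s view of $a_X$ adds nothing against a fixed $\sigma^X$, fills in what the paper calls ``not hard to see'' (the bijection of histories relies on histories recording actions, as in the paper's $\H = V \cdot (A^m \cdot V)^*$, which you implicitly assume).
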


The lemma provides a scheme to prove determinacy: one needs to show that if Player~$X$ does not win $\G_X$, one needs to show that Player~$Y$ wins $\G_Y$.

\section{Multi-player Discrete-Bidding Games}
We present multi-player discrete-bidding games. An $m$-player discrete-bidding game is played on an arena $\zug{m, k, V, E, \set{T_v}_{v\in V}, \set{R_v}_{v \in V}}$, where $k \in \Nat$ is the total budget, $\zug{V, E}$ is a directed graph, $T_v$ and $R_v$ are permutations over $m$ that specify orders on players that are used, respectively, to break ties and to round payments of bids. We omit $v$ from $T_v$ and $R_v$ when it is clear from the context. A {\em configuration} of a bidding game is $\zug{v, B_1, \ldots, B_m}$ with $B_i \in [k]$ and $\sum_{i \in [m]} B_i = k$, meaning that the token is placed on $v \in V$ and \PLi's budget is $B_i$, for $i \in [m]$. We denote by $\C$, the set of configurations. 

Intuitively, in each turn, each player chooses a bid that does not exceed their available budget, the highest bidder moves the token, and pays his bid to the other players. 
To formalize the semantics of a discrete-bidding game $\G$, we describe the explicit concurrent game $\G' = \zug{A, V', \delta, \lambda}$ that $\G$ corresponds to. Note that both $\G$ and $\G'$ are arenas and we introduce objectives later on. 

The vertices are $\C \cup (\C \times A)$, where we refer to $\C \times \A$ as {\em intermediate vertices}. 
The actions consist of bids and vertices to move to upon winning a bidding, namely $A = [k] \cup V$. 
Biddings occur at configuration vertices. For $c  = \zug{v, B_1, \ldots, B_m} \in \C$ and $i \in [m]$, the available actions for \PLi are bids that do not exceed his budget, thus $\lambda(c, i) = [B_i]$. Consider a joint action $\overline{b} = \zug{b_1, \ldots, b_m} \in [k]^{m}$ at $c$. We identify the player who wins the bidding. Let $b_{\max} =\max \set{b_i}$ be the highest bid. Observe that ties can occur. The winner is the first player according to the order $T_v$ among the players who bid highest, namely $i_{\text{win}} = \max \set{T_v(i): b_i = b_{\max}}$. The budgets are updated as follows. First, $B'_{i_{\text{win}}} = B_{i_{\text{win}}} - b_{\max}$. Second, the bid is paid evenly to the other players, where rounding is determined according to the order $R_v$. Formally, let $b_{\max} = a \cdot (m-1) + r$, for $a,r \in \Nat$. First, for each $i' \neq i_{\text{win}}$, we first pay $a$, thus $B'_{i'} = B_{i'} + a$, and second, an additional coin is given to the top $r$ players according to the order $R_v$. 

We conclude the construction by describing the transitions. From $c$ with joint action $\overline{b}$, the game proceeds to intermediate vertex $\zug{c, \overline{b}}$, which is controlled by Player~$i_{\text{win}}$. His available actions are $\lambda(\zug{c, \overline{b}}, i_{\text{win}}) = \set{u \in V: E(v,u)}$. Choosing action $u \in V$ leads to the configuration vertex $c' = \zug{u, B'_1,\ldots, B'_m}$. 

\stam{OLD
 $\G'$ where each vertex of $G'$ is a configuration vertex or intermediate vertex. {\em Configuration vertex} $c$ is a tuple $\zug{v, B_1, \ldots, B_m}$, where $v$ is the vertex in $G$ on which the token is placed, and $B_i$ is the budget of player $i$ for $i \in [m]$. {\em Intermediate vertex} is $\zug{c, b_1, \ldots, b_m}$ where c is a configuration vertex and $b_i$ is the bid of Player i, for $i \in [m]$. Configuration vertex $c=\zug{v, B_1, \ldots, B_m}$ is connected to intermediate vertices of the form $\zug{c, b_1, \ldots, b_m}$ where $b_i \in \{0,...,B_i\}$, for $i \in [m]$. Intermediate vertex $x=\zug{c, b_1, \ldots, b_m}$ is connected to a configuration vertex of the form $\zug{v', B'_1, \ldots, B'_m}$ where $v'$ is a neighbor of vertex v in $G$, and $B'_i$ is the new budget of Player i, for $i \in [m]$, after applying bids $b_1, \ldots, b_m$.

Formally, we describe an $m$-player discrete-bidding game $G$ that played on an arena $\zug{m, k, V, E, T, R}$ by two coalitions: $X=\{p_1,\ldots,p_{m_x}\}$ and $Y=\{p_1,\dots,p_{m_Y}\}$ where $m_X + m_Y = m$ as a concurrent game $G'=\zug{A, V', \delta, \lambda}$ where:
\begin{itemize}
    \item $A = A_X \cup A_Y$ where $A_X = \{ \langle b_1, \dots, b_{m_X} \rangle : b_i \in \{0, \dots, B_i\} for i \in [m_X] \}$. Dually, $A_Y = \{ \langle b_1, \dots, b_{m_Y} \rangle : b_i \in \{0, \dots, B_i\} for i \in [m_Y] \}$
    \item $V'= C \cup I$ where $C = \{\zug{v, B_1, \ldots, B_m}: v \in V, B_1+ \ldots+ B_m=k\}$ and $I = \{\zug{c, b_1, ..., b_k}: c=\zug{v, B_1, \ldots, B_m} \in C, b_i \in \{0,...,B_i\} for i \in [m]\}$
    \item For $c \in C$: $\delta(c, \zug{b_1, ..., b_{m_X}},\zug{b_1, ..., b_{m_Y}})=\zug{c, b_1, ..., b_k}$.
    For $i_c = \zug{c, b_1, \ldots, b_m}$ where $c=\zug{v, B_1, \ldots, B_m} \in C$: $\delta(i_c, \zug{b_1, ..., b_{m_X}},\zug{b_1, ..., b_{m_Y}})=\zug{v', B'_1, ..., B'_m}$ where $v'$ is a neighbor of vertex v in $G$ and $B_i'$ is the updated budget of Player i for $i \in [m]$.
    \item $\lambda(c, X)=\{\zug{b_1, ..., b_{m_X}}: b_i \in \{0,...,B_i\} for i \in [m_X]\}$ where $B_i$ is the budget of Player i in vertex c. Dually, $\lambda(c, Y)=\{\zug{b_1, ..., b_{m_Y}}: b_i \in \{0,...,B_i\} for i \in [m_Y]\}$.
\end{itemize}
}


\section{Multi-Player Discrete-Bidding Games are Determined}
\label{sec:determinacy}


In this section we study bidding games that are played between two coalitions. Let $\G = \zug{m, k, X, Y, V, E, \set{T_v}_{v\in V}, \set{R_v}_{v \in V}, \O}$, where $X \cup Y$ is a partition of $[m]$ and $\O$ is an objective for \CoaX. This section is devoted to proving the following result.

\begin{theorem}\label{thm:main}
Muller multi-player discrete-bidding games are determined; namely, for every bidding game $\G$, M\"uller objective $\O$, and configuration $c$, either \CoaX has a strategy from $c$ that guarantees $\O$ or \CoaY has a strategy that violates $\O$. 
\end{theorem}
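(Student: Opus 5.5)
The plan is to combine Lemma~\ref{lem:TB-det} with a \emph{local determinacy} property of a single bidding, and then lift it to infinite plays, following the local-vs-global framework of~\cite{AAH21}. Fix $\G$, the partition $X \cup Y$, and a M\"uller objective $\O$. By Lemma~\ref{lem:TB-det} it suffices to show the following: if Player~$X$ does not win $\G_X$ from $c$, then Player~$Y$ wins $\G_Y$ from $c$. I would first reduce the M\"uller objective to a parity objective on a product with a finite latest-appearance-record memory. Then both turn-based games $\G_X$ and $\G_Y$ are parity games on the same finite product arena, and they admit positional winning strategies. This reduction is what makes a purely local argument possible, since Player~$Y$'s winning behaviour in $\G_X$ then depends only on the current (product) configuration and on $X$'s revealed joint bid.

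The key local step is to simplify the joint bids. At a configuration $c=\zug{v,B_1,\ldots,B_m}$, which player wins the bidding and the budget update depend only on the winner $i$ and the winning bid $b$. Any player who does not win pays nothing. So a coalition's joint bid is, up to outcome equivalence, a choice of a pair $(i,b)$ with $i$ in the coalition and $b \le B_i$; the other members bid $0$, or low enough to be irrelevant. Because the tie-breaking is linear, pairs are totally ordered lexicographically: first by $b$, then by the rank of $i$ in $T_v$. Hence the bidding at $c$ is equivalent to a two-sided game in which the strictly larger pair wins and there are no ties. It is exactly here that linearity of $T_v$ is used; Fig.~\ref{fig:example2} shows that without it this step fails.

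Next I would prove the \emph{local lemma}. For any set $S$ of successor configurations, either $X$ has a pair $p$ such that every $Y$-response leads to an outcome in $S$, or $Y$ has a pair $q$ such that every $X$-response leads to an outcome outside $S$. Call a pair \emph{good for its owner} if, when the owner wins with it, the owner can choose a successor on its own side of $S$ (in $S$ for $X$, outside $S$ for $Y$). Scan all pairs of both coalitions from the top of the total order, and let $p^*$ be the first pair that is good for its owner. Every pair above $p^*$ is then not good for its owner, so if such a pair wins, every move of its owner lands on the side favoured by $p^*$'s owner. If the opponent bids below $p^*$, then $p^*$ wins and is good. Hence $p^*$'s owner wins locally. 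If no pair is good for anyone, then the top pair's owner is forced onto the other side, and any bid of that other coalition works.

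The main obstacle is the global lifting, i.e.\ ensuring that Player~$Y$, who in $\G_Y$ must commit first, can still realise the parity condition rather than merely stay in a region. I would take $W$ to be Player~$Y$'s winning region in the product version of $\G_X$, and fix a positional winning strategy $\tau$ for $Y$ there. At $c \in W$, $X$ has no bid all of whose continuations leave the set of outcomes compatible with $\tau$. Indeed, otherwise $X$ would escape $\tau$'s winning plays, because $\tau$ answers every $X$ bid by staying inside $W$ with $\tau$-consistent moves. So the local lemma, applied with $S$ the set of outcomes that are not $\tau$-consistent, gives $Y$ a single pair $q(c)$ that forces a $\tau$-consistent outcome against every $X$ response. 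I then need to check that the play $Y$ produces in $\G_Y$ is literally a play consistent with $\tau$ in $\G_X$. To secure this, the winning move of $Y$ after winning with $q(c)$ should itself be chosen as $\tau$ would choose it against some $X$ bid yielding the same outcome. The parity, and hence M\"uller, condition then transfers, and $Y$ wins $\G_Y$. I expect the care needed to define $S$ and this outcome correspondence, rather than mere invariance of $W$, to be the delicate part.
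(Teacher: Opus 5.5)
Your proof is correct, but it differs from the paper on both halves.

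\textbf{Local step.} The paper works with the bidding matrix $M_c$. It reduces to equal maximal budgets, inducts on the number of tie sub-matrices, and iteratively ``reveals'' players whose boundary entries and $T_v$-precedence are forced (Lem.~\ref{edges}--\ref{columns}, Prop.~\ref{LocallyDet}). You instead observe three things. Under linear tie-breaking, a coalition's joint bid matters only through its top pair (player, bid). All pairs of both coalitions lie in a single strict total order. The owner of the largest pair that is good for its owner can therefore reveal it first. This is much shorter and isolates exactly where linearity of $T_v$ enters. It also yields the one-round property for an \emph{arbitrary} target set $S$, not only for the winning region of $\G_X$.

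\textbf{Global step.} The paper simply invokes Thm.~\ref{localtodet}, which goes through reachability versus safety, the cycle-forming game, and a M\"uller-to-parity reduction. You give a self-contained lifting instead:
\begin{itemize}
\item Fix a positional winning strategy $\tau$ for $Y$ in the LAR product of $\G_X$.
\item Take $S$ to be the complement of the configurations reachable from $c$ in one $\tau$-consistent round.
\item Observe that the $X$-alternative of your local lemma is refuted by $\tau$'s own reply.
\item Use positionality: concatenating the witnessing rounds gives a $\tau$-consistent play of $\G_X$ with the same configuration sequence as the play in $\G_Y$.
\end{itemize}
This is exactly where the generality in $S$ pays off, since mere invariance of $W$ gives no progress. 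It also shows that $Y$ wins $\G_Y$ with LAR memory. The paper's route, in exchange, reuses a black box that applies to locally determined concurrent games in general.

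\textbf{Two small fixes in your local lemma.}
\begin{itemize}
\item Your claim about pairs above $p^*$ holds only for the opponent's pairs, and only those are needed.
\item Your argument for the case where no pair is good is wrong as stated: the opponent can undercut a revealed pair and concede the bidding. However, that case never occurs. Both coalitions' bid-$0$ pairs leave the budgets unchanged, so they have the same successors, and hence one of them is good for its owner.
\end{itemize}
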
  

In Sec.~\ref{sec:local-global}, we present local determinacy in the context of multi-player bidding games and show how local determinacy implies global determinacy. In Sec.~\ref{sec:prop} we identify properties of bidding games that we use in Sec.~\ref{sec:proof} to show that multi-player bidding games are locally determined and are thus determined. 

\subsection{Local and global determinacy}
\label{sec:local-global}

We describe local determinacy in general games. 
Recall (Lem.~\ref{lem:TB-det}) that in order to prove that a game $\G$ is determined, one needs to show that if Player~$X$ does not win $\G_X$, then Player~$Y$ wins $\G_Y$. Further recall that $\G_X$ is a turn-based game, thus if Player~$X$ does not win $\G_X$, then Player~$Y$ wins $\G_X$. Suppose that this is the case, and let $v$ be a vertex in $\G_X$ from which Player~$Y$ wins. Local determinacy is the following property: from $v$, there is a joint action $a^Y$ that Player~$Y$ can reveal before Player~$X$ such that for all joint actions $a^X$, the next vertex $\delta(v, a^X, a^Y)$ is winning for Player~$Y$. Local determinacy in bidding games is described in terms of a {\em bidding matrix}, which we present next.




\paragraph*{The bidding matrix}
Consider a bidding game $\G$ and a configuration $c = \zug{v, B_1, \ldots, B_m}$. 
The {\em bidding matrix} $M_c$ captures all the possible outcomes of the bidding at $c$. 

\begin{definition}[Single bidder bid]
For $\alpha \in \set{X,Y}$ and $i \in \alpha$, denote by $b^\alpha_i = \zug{0,\ldots,0,b_i,0,\ldots,0} \in [k]^\alpha$ the joint bid in which \PLi bids $b_i \in [k]$ and the other players bid $0$. 
\end{definition}

The following lemma shows that we can restrict the coalitions to choose only single bidder joint bids. 

\begin{lemma}
Consider actions $b^\alpha$ of coalition $\alpha \in \set{X, Y}$. Let $b$ be the highest bid in $b^\alpha$ and $i \in \alpha$ be the player who has precedence out of all players in $\alpha$ that bid $b$. Then, for every action $b^{\oalpha}$ of the other coalition, the outcome of $b^\alpha, b^{\oalpha}$ coincides with the outcome of $b^\alpha_i, b^{\oalpha}$. Thus, if \CoaA has a winning strategy, the coalition has a strategy that only chooses single bidder joint bids. 
\end{lemma}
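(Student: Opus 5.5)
Fix a configuration $c=\zug{v,B_1,\ldots,B_m}$, an action $b^\alpha$ of \CoaA, and an arbitrary action $b^{\oalpha}$ of the other coalition. Let $b$ and $i$ be as in the statement. The outcome of a bidding is the triple consisting of the winner, the updated budget vector, and hence the successor intermediate vertex. The plan is to show that this triple is the same under $(b^\alpha,b^{\oalpha})$ and under $(b^\alpha_i,b^{\oalpha})$.

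\textbf{Step 1: same winning bid.} Let $b'$ be the highest bid in $b^{\oalpha}$. In both joint actions the overall maximum is $b_{\max}=\max(b,b')$, because $b^\alpha_i$ contains the bid $b$ of \PLi and otherwise only zeros. The case $b=0$ is degenerate but harmless: $b^\alpha_i$ is then the all-zero vector, which equals $b^\alpha$.

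\textbf{Step 2: same winner.} Write $W$ and $W_i$ for the sets of highest bidders under the two joint actions. If $b<b'$, then $W=W_i$ is contained in $\oalpha$, so the winner is identical. If $b\geq b'$, then $W\cap\alpha$ is the set of players of $\alpha$ bidding $b$, whereas $W_i\cap\alpha=\set{i}$, and $W\cap\oalpha=W_i\cap\oalpha$. The key point is that \PLi is the $T_v$-maximal element of $W\cap\alpha$. Hence
\[
\max_{T_v} W=\max_{T_v}\bigl(\set{i}\cup(W\cap\oalpha)\bigr)=\max_{T_v} W_i,
\]
since dropping players from $W$ that are $T_v$-dominated by $i$, which stays in the set, does not change the $T_v$-maximum. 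So the same player $i_{\text{win}}$ wins.

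\textbf{Step 3: same budgets and successor.} The budget update depends only on the current budgets, $i_{\text{win}}$, $b_{\max}$, and $R_v$. It does not depend on the losing bids. Therefore the updated budgets coincide, and the reached intermediate vertex, which is controlled by the same player with the same available moves, is the same.

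\textbf{Step 4: the strategy consequence.} Given a winning strategy $\sigma^\alpha$, define $\tau^\alpha$ to replace each bid $\sigma^\alpha(h)$ by the corresponding single bidder bid $b^\alpha_i$, and to keep moves at intermediate vertices unchanged. By Steps~1--3, applied at every turn, for every strategy of the opponent the play under $\tau^\alpha$ traverses the same configurations as some play consistent with $\sigma^\alpha$. To make this formal, map histories of $\tau^\alpha$ back to histories of $\sigma^\alpha$ by re-inserting the original bids, and define the opponent's responses accordingly; the two plays then share the same path and hence the same winner.

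The main obstacle is only the bookkeeping in Step 2 and the history translation in Step 4. The latter is needed because strategies observe actions, so the opponent's strategy must be rewritten on the translated histories. I would handle it by induction on history length.
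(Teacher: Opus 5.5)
Your proof is correct and follows the paper's argument. The $T_v$-top $\alpha$-bidder $i$ either loses to the same $\oalpha$ player regardless of the other $\alpha$ bids, or wins with the same bid $b$, so the winner, payment and budget update coincide; your Step~4 history translation spells out what the paper leaves implicit. One wording nit: the intermediate vertices $\zug{c,\overline{b}}$ reached under the two joint bids are formally distinct because they record the bids, but they have the same controller and the same successor configurations, which is all your argument uses.
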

\stam{
\begin{proof}
If \PLi loses the bidding, the claim is trivial since all bids made by Coalition~$\alpha$ do not affect the outcome. Otherwise, \PLi wins the bidding since in $b^\alpha$, \PLi bids higher than all other players in Coalition~$\alpha$, thus the outcome is the same when all other players in $\alpha$ bid $0$. 
\end{proof}
}

Each row and column in $M_c$ respectively correspond to a single bidder joint bid $b^X_i$ and $b^Y_j$, for $i \in X$ and $j \in Y$ (see Fig.~\ref{fig:matrix}).
Note that $M_c$ has $1+\sum_{i \in X} B_i$ rows and $1+\sum_{i \in Y} B_i$ columns, where the extra row and column are for the joint bid of $0$. An entry $\zug{b^X_i, b^Y_j}$ is either $X$ or $Y$ and this is determined by which coalition wins the intermediate vertex $\zug{c, b^X_i, b^Y_j}$ in $\G_X$\footnote{The choice of $\G_X$ over $\G_Y$ is arbitrary.}.

\paragraph*{From local to global determinacy}
Recall that local determinacy is the property that the winner in a configuration of $\G_X$ has an action that they can reveal first such that no matter how the opponent reacts, the game remains in a winning vertex. 

\begin{definition}[Local determinacy in multi-player bidding games]
A multi-player discrete-bidding game $\G$ is \textbf{locally determined} if for every configuration $c$, the bidding matrix $M_c$ either has a row all of whose entries are $X$ or a column, all of whose entries are $Y$.
\end{definition}

Local determinacy was shown in \cite{AAH21} to lead to (global) determinacy for a class of concurrent games that slightly extend bidding games and which our multi-player discrete-bidding games falls under. We describe the rough idea. Suppose that $\G$ is locally determined, that \CoaX's goal is reachability, and that \CoaY loses $\G^Y$. Thus, \CoaX can force staying in winning vertices in $\G^X$. This does not suffice for winning since progress towards the target needs to be made. Fortunately, the proof works in the other direction works: if \CoaX does not win $\G^X$, then \CoaY, whose objective is safety, can force staying in the winning region of $\G^Y$, which here, suffices for winning. Thus, local implies global determinacy in reachability games. To extend to parity objectives, one reduces a parity game to a reachability game via the ``cycle-forming game'' construction (e.g.,~\cite{AR17}), and reduces M\"uller games to parity games following a standard construction (see~\cite{GTW02}). 


\begin{theorem}\label{localtodet}\cite{AAH19}
If a multi-player M\"uller discrete-bidding game is locally-determined, then it is also (globally) determined. 
\end{theorem}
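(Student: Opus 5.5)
The plan is to prove the theorem first for reachability objectives by an attractor argument, and then lift it to parity and M\"uller objectives by standard reductions. Throughout I use Lem.~\ref{lem:TB-det}: it suffices to show that if Player~$X$ does not win $\G_X$ from a configuration $c$, then Player~$Y$ wins $\G_Y$ from $c$.

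\textbf{Reachability.} Let $W_X$ be Player~$X$'s winning region in the turn-based game $\G_X$. For reachability this is the $X$-attractor to the target configurations, and $W_X$ contains all targets. Fix a configuration $c \notin W_X$. By local determinacy, $M_c$ has either an $X$-row or a $Y$-column.
\begin{itemize}
\item An $X$-row cannot occur. Every entry of that row is an intermediate vertex of $\G_X$ won by $X$. So Player~$X$ could reveal that single bidder bid first at $c$ in $\G_X$ and win, which would put $c \in W_X$.
\item Hence there is a $Y$-column $b^Y_j$. For every $X$-response, the intermediate vertex $\zug{c,b^X_i,b^Y_j}$ is won by $Y$ in $\G_X$.
  \begin{itemize}
  \item If a member of $Y$ wins the bidding, that player controls the intermediate vertex and can pick a successor configuration outside $W_X$.
  \item If a member of $X$ wins, then every successor lies outside $W_X$, since otherwise the intermediate vertex would be in the $X$-attractor.
  \end{itemize}
\end{itemize}
The column works against all $X$-bids, so it can be revealed first, which is exactly what Player~$Y$ must do in $\G_Y$. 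Therefore Player~$Y$ has a strategy in $\G_Y$ that keeps the play outside $W_X$ forever. In particular the play never reaches a target, so Coalition~$Y$ wins.

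\textbf{Parity.} The set $\C$ of configurations is finite, so I would use the cycle-forming game (as in~\cite{AR17}). The play is stopped at the first repetition of a configuration, and the winner is decided by the maximal parity index on the closed cycle. This is a finite-duration game, i.e., a reachability game on the finite tree of simple histories. By positional determinacy of the turn-based parity games $\G_X$ and $\G_Y$, each player wins the cycle-forming game iff they win the original game. The tree game is again a discrete-bidding game, with the same local structure of budgets, tie-breaking order $T_v$, and rounding order $R_v$ at each node. So the reachability argument applies and transfers back.

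\textbf{M\"uller.} I would reduce to parity via the latest-appearance-record product~\cite{GTW02}. The product arena is a bidding game on the graph $V \times \mathrm{LAR}$, with $T$ and $R$ copied from the $V$-component, so it remains a multi-player discrete-bidding game of the same kind.

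\textbf{Main obstacle.} The delicate point is that local determinacy is a hypothesis about the particular game, so I must check that it is preserved under both reductions. I would handle this by observing that the reduced games are again bidding games of the class considered, with the same budgets and the same tie-breaking and rounding orders at each step. The hypothesis should therefore be read as holding for that whole class. A second, lesser check is the equivalence for the cycle-forming game, where I would invoke positional (memoryless) determinacy of turn-based parity games on the finite arenas $\G_X$ and $\G_Y$.
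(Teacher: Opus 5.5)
Your proposal is correct and takes the same route as the paper's sketch, which defers to the cited work. For reachability, Coalition~$Y$ reveals a $Y$-column first and so stays outside $X$'s winning region of $\G_X$ forever; parity is then reduced to reachability via the cycle-forming game, and M\"uller to parity via the LAR product. Your point that these reductions need local determinacy of the derived tree and product games, so the hypothesis must in effect be read class-wide, is exactly what the paper relies on implicitly: the proof of Prop.~\ref{LocallyDet} uses only that matrix entries depend on the bidding outcome, so it applies verbatim to those derived games.
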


\subsection{Properties of the bidding matrix}
\label{sec:prop}
In this section, we show properties of bidding matrices with which we will later prove local determinacy. 
The proofs in this section rely on the following lemma, which intuitively states that from a configuration $c$, if the biddings outcome is the same, the winner of the game will be the same, thus the corresponding entries in $M_c$ coincide. We say that two joint bids $\overline{b} = \zug{b_1,\ldots, b_m}$ and $\overline{b'} =  \zug{b'_1,\ldots, b'_m}$ lead to the same outcome if the highest bid is the same, i.e., $b_{\max} =\max \set{b_i} = \max \set{b'_i}$, and the winning player coincides, i.e., $\max \set{T(i): b_i = b_{\max}} = \max \set{T(i): b'_i = b_{\max}}$.

\begin{lemma}
Consider a configuration $c = \zug{v, B_1, \ldots, B_m}$ and two joint bids $\overline{b}$ and $\overline{b'}$ that lead to the same outcome. Then, for $\alpha \in \set{X, Y}$, the winner in $\G_\alpha$ in intermediate vertex $\zug{c, \overline{b}}$ coincides with $\zug{c, \overline{b'}}$. In particular, the corresponding entries in $M_c$ coincide. 
\end{lemma}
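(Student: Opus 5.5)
The plan is to show that the two intermediate vertices $\zug{c, \overline{b}}$ and $\zug{c, \overline{b'}}$ of $\G_\alpha$ are, up to the history recorded in the play, interchangeable. I would argue this directly from the semantics of the explicit concurrent game $\G'$ defined in the previous section.

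First I would show that the controlling player and the available moves coincide. Since $b_{\max}$ and the winner $i_{\text{win}} = \max\set{T_v(i) : b_i = b_{\max}}$ are the same for both joint bids, both intermediate vertices are controlled by the same Player~$i_{\text{win}}$. His available actions are $\set{u \in V : E(v,u)}$ in both cases, since this set depends only on $v$. In $\G_\alpha$ the intermediate vertex of the bidding game is controlled by the coalition containing $i_{\text{win}}$, and the choice of successor $u$ is made there. The controller is therefore the same in both vertices.

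Next I would show that the successors coincide. The updated budgets $B'_1,\ldots,B'_m$ are computed only from $c$, $i_{\text{win}}$, $b_{\max}$ and the fixed order $R_v$. Concretely, $B'_{i_{\text{win}}} = B_{i_{\text{win}}} - b_{\max}$, each other player receives $a$ where $b_{\max} = a(m-1)+r$, and the top $r$ players in $R_v$ among the others receive an extra coin. None of this depends on the losing bids. Hence for every $u$, choosing $u$ from either intermediate vertex leads to the same configuration vertex $\zug{u, B'_1,\ldots,B'_m}$. The two vertices thus have the same owner and the same successor map.

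Finally I would transfer winning strategies between the two vertices. Let $\sigma$ be a winning strategy for some player from $\zug{c,\overline{b}}$. Define $\sigma'$ from $\zug{c,\overline{b'}}$ by playing the same move and then, after replacing the first vertex of the history, continuing as $\sigma$ does. The resulting plays visit exactly the same sequence of configurations after the first step. The Müller objective depends only on the vertices of $V$ visited infinitely often, which are read off from the configurations, so it is not affected by a finite prefix. This gives a bijection between plays that preserves the objective, so the winners coincide; if positional strategies are preferred, the same map works verbatim. Because entries of $M_c$ are defined as the winner at these intermediate vertices of $\G_X$, the matrix entries coincide.

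The main obstacle is only bookkeeping: checking that the rounding rule really ignores the non-maximal bids and that histories differ only in an irrelevant finite prefix. I expect no real difficulty there.
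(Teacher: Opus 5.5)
Your proposal is correct and matches the paper's proof in substance. Both arguments observe that the two intermediate vertices are controlled by the same player and, since the budget update depends only on $c$, $b_{\max}$, $i_{\text{win}}$ and $R_v$, have identical successor configurations, so the winners must agree. The only difference is the final step: the paper appeals to determinacy of the turn-based game $\G_\alpha$ (the controller's coalition wins iff some common neighbor is winning for it), whereas you transfer winning strategies directly via the root-relabeling bijection, which avoids determinacy but is otherwise equivalent.
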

\begin{proof}
Observe that $\zug{c, \overline{b}}$ and $\zug{c, \overline{b'}}$ are controlled by the same player, namely the winner of the bidding, denoted $i \in \alpha$, for $\alpha \in \set{X, Y}$. Since the winning bids coincide and the rounding mechanism depends only on $v$, the budget updates coincide. Thus, the neighbors of $\zug{c, \overline{b}}$ and $\zug{c, \overline{b'}}$ coincide. Due to determinacy of turn-based games, each neighbor is either won by $X$ or by $Y$. Then, the winner in $\zug{c, \overline{b}}$ and $\zug{c, \overline{b'}}$ is $\alpha$ iff there is a neighbor that is won by \CoaA. In particular, the winners in both intermediate vertices coincide. 
\end{proof}

\begin{wrapfigure}{r}{0.44\textwidth}
     \centering
     \vspace{-1.5\baselineskip}
     \includegraphics[width=0.33\textwidth]{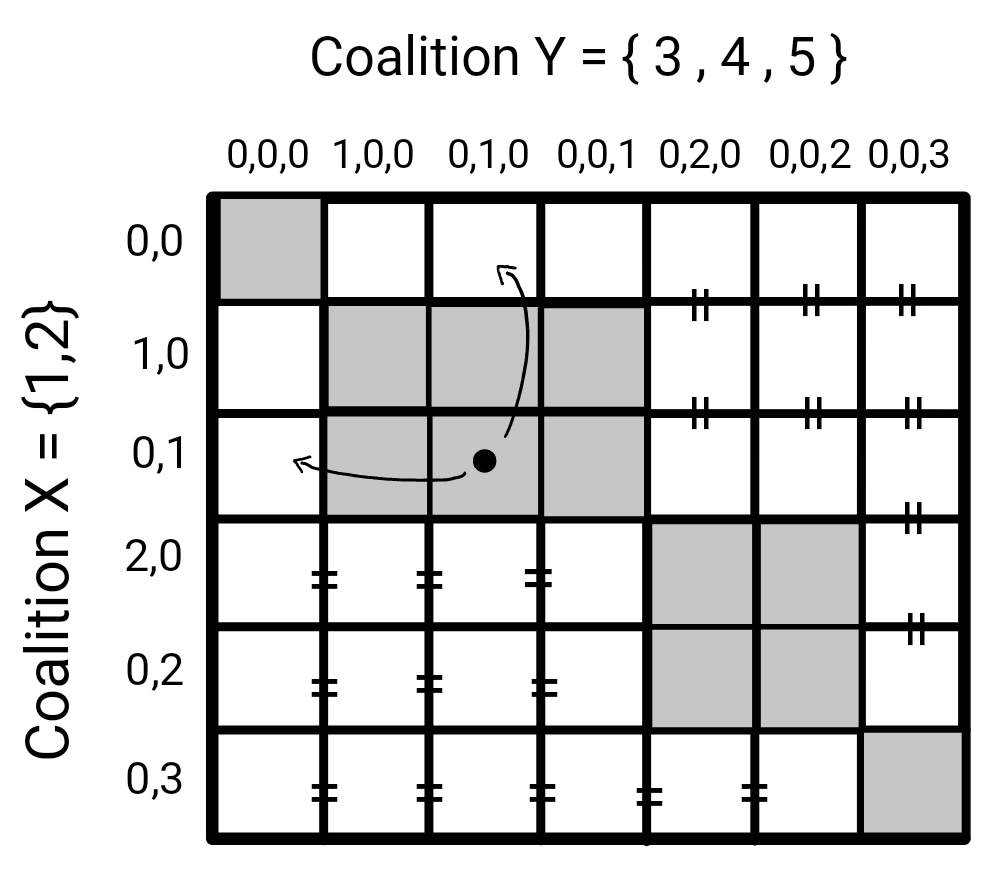}
     \caption{A bidding matrix $M_c$ for $c = \zug{2, 3,1,2,3}$, and $X$=$\{1,2\}$ and $Y$=$\{3,4,5\}$.}
     \label{fig:matrix}
     \vspace{-1.0\baselineskip}
\end{wrapfigure}

We order $M_c$ by highest bid, namely for $b_i < b_j$ with $i,j \in \alpha$, joint bid $b^\alpha_i$ appears before $b^\alpha_j$ (see Fig.~\ref{fig:matrix}). 
For $b \in [k]$, we denote by $S_b$, the {\em tie sub-matrix} that corresponds to outcomes that result in bidding ties with a bid of $b$. In Fig.~\ref{fig:matrix}, $M_c$ has four tie sub-matrices (shaded gray) $S_0$, $S_1$, $S_2$, and $S_3$.
The following lemma is demonstrated in Fig.~\ref{fig:matrix}: the entry represented with a point equals one of the entries that the arrows point to.


\begin{lemma}\label{edges}
    Entry $\zug{b^X_i,b^Y_j}$ in $S_b$ either equals entry $\zug{0,b^Y_j}$ or $\zug{b^X_i,0}$.
\end{lemma}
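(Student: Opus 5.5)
The plan is to reduce this to the preceding lemma on equal outcomes: two joint bids with the same highest bid and the same bidding winner yield the same winner in the intermediate vertex, so their entries in $M_c$ agree. The goal is to exhibit, for an entry $\zug{b^X_i, b^Y_j}$ in $S_b$, a joint bid in which one coalition bids $0$ and which produces the same outcome.

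First unpack what membership in $S_b$ means. The entry lies in the tie sub-matrix $S_b$ exactly when $b_i = b_j = b$, so \PLi and \PLj tie with the highest bid $b$. All other players bid $0$ in the single bidder joint bids $b^X_i$ and $b^Y_j$. For $b>0$ the set of highest bidders is therefore exactly $\set{i,j}$. The winner is whichever of $i,j$ has precedence under $T_v$; call it $w \in \set{i,j}$.

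Next split on $w$. If $w = i$, compare with the joint bid $\zug{b^X_i, 0}$: its highest bid is still $b$ and its unique highest bidder is $i$. It therefore has the same maximal bid and the same winner as $\zug{b^X_i, b^Y_j}$, and the equal-outcome lemma gives that the entries coincide. Symmetrically, if $w = j$, the joint bid $\zug{0, b^Y_j}$ has highest bid $b$ with unique top bidder $j$, so the entry equals $\zug{0, b^Y_j}$.

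The only step needing care is the degenerate tie $S_0$, where $b=0$ and every player bids $0$. Here the whole tie set consists of all $m$ players, and the row $b^X_i$ with $b_i=0$ coincides with the zero row. So $\zug{b^X_i, b^Y_j}$ is literally the same joint bid as $\zug{0, b^Y_j}$ (and as $\zug{b^X_i, 0}$), and the claim is trivial. I would state this case separately. I would also check that the ordering of rows and columns by highest bid and the indexing by $i,j$ are consistent with the convention that the zero row and column are $\zug{0,\cdot}$ and $\zug{\cdot,0}$. I expect no real obstacle beyond this bookkeeping.
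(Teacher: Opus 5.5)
Your proposal is correct and follows the paper's argument. The tie at bid $b$ is won by whichever of Player~$i$ and Player~$j$ has precedence under $T_v$, and the outcome then coincides with that of $\zug{b^X_i,0}$ or $\zug{0,b^Y_j}$ respectively, so the same-outcome lemma makes the entries equal. Your separate handling of $S_0$ (which is just the single entry $\zug{0,0}$) and your dropping of the paper's unnecessary assumption that the entry is $X$ are harmless refinements.
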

\begin{proof}
    Suppose entry $\zug{b^X_i,b^Y_j}$ is $X$. Suppose that $T(i)>T(j)$ and the case $T(i)<T(j)$ is dual. Thus, \PLi wins the bidding and pays $b$ to the other players. This coincides with the outcome $\zug{b^X_i,0}$, thus both entries are equal.
\end{proof}

As an example of the following lemma, in Fig.~\ref{fig:matrix}, if $\zug{(0,1),(0,1,0)}$ is not equal to $\zug{(0,0),(0,1,0)}$, then $T(2)>T(4)$.
\begin{lemma}\label{tie}
    Consider an entry $\zug{b^X_i,b^Y_j}$ in $S_b$. If it is not equal to $\zug{0,b^Y_j}$, then $T(i)>T(j)$.
    Similarly, if $\zug{b^X_i,b^Y_j}$ is not equal to $\zug{b^X_i,0}$, then $T(j)>T(i)$.
\end{lemma}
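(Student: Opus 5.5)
We prove the first claim by contrapositive and get the second by symmetry. Fix an entry $\zug{b^X_i,b^Y_j}$ in the tie sub-matrix $S_b$. Then \PLi and \PLj both bid $b$, all other players bid $0$, and $b$ is the highest bid. We assume $T(i)<T(j)$, i.e., $T(i)>T(j)$ fails, and show that $\zug{b^X_i,b^Y_j}$ equals $\zug{0,b^Y_j}$. Note that $i\neq j$, since $i\in X$ and $j\in Y$ and $X\cap Y=\emptyset$.

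First we identify the outcome at $\zug{b^X_i,b^Y_j}$. The maximal bid is $b$, and the players bidding $b$ are exactly $i$ and $j$. Since $T(j)>T(i)$, the winner is $\max\set{T(\cdot)}$ over these two, namely \PLj.

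Next we identify the outcome at $\zug{0,b^Y_j}$. Here only \PLj bids $b$ and everyone else bids $0$. If $b>0$, then \PLj is the unique highest bidder and wins with bid $b$. In both joint bids the highest bid is $b$ and the winning player is $j$, so the two joint bids lead to the same outcome.

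The case $b=0$ needs separate attention, and it is the only real obstacle. In $\zug{0,b^Y_j}$ every player bids $0$, so the tie is among all $m$ players, and the winner is the $T$-maximal player overall. That player need not be $j$. The same issue arises for $\zug{b^X_i,b^Y_j}$ in $S_0$, which is itself an all-zero joint bid. In fact, when $b=0$ all joint bids in question are the all-zero vector, so every entry of $S_0$ is the same entry. For $S_0$ we therefore read the entries under the convention the paper uses for the rows and columns labelled by $b^X_i$ and $b^Y_j$ with $b=0$: the single-bidder bid $b^\alpha_i$ with $b_i=0$ is treated as the zero joint bid. Under this reading all the entries coincide, so the hypothesis "not equal" never holds and the statement is vacuous for $b=0$. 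Hence we may assume $b>0$, and the previous paragraph applies.

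Now apply the preceding lemma on joint bids leading to the same outcome. Since $\zug{b^X_i,b^Y_j}$ and $\zug{0,b^Y_j}$ lead to the same outcome, the winners in $\G_X$ at the intermediate vertices $\zug{c,b^X_i,b^Y_j}$ and $\zug{c,0,b^Y_j}$ coincide. So the two entries of $M_c$ are equal, which is the contrapositive of the first claim.

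The second claim follows by the symmetric argument with the roles of $X$ and $Y$ swapped. If $T(i)>T(j)$, then \PLi wins both at $\zug{b^X_i,b^Y_j}$ and at $\zug{b^X_i,0}$ with bid $b$. The same lemma gives equality of these entries, so if they differ we must have $T(j)>T(i)$. This is also consistent with Lemma~\ref{edges}.
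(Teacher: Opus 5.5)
Your proof is correct and follows the paper's argument: if $T(j)>T(i)$, then Player~$j$ wins the tie at $\zug{b^X_i,b^Y_j}$, which is the same outcome as at $\zug{0,b^Y_j}$, so the same-outcome lemma makes the two entries equal; the second claim follows symmetrically. Your separate treatment of $b=0$ is extra care that the paper leaves implicit, though it is not really a convention: by definition $b^\alpha_i$ with $b_i=0$ is the all-zero joint bid, so $S_0$ is the single entry $\zug{0,0}$ and the claim holds vacuously there.
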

\begin{proof}
    Assume towards contradiction that $T(i)<T(j)$, then Player~$j$ wins the bidding in $\zug{b^X_i,b^Y_j}$, which coincides with the outcome $\zug{0,b^Y_j}$, but then the entries are equal, which is a contradiction to the assumption $\zug{b^X_i,b^Y_j}$ is not equal to $\zug{0,b^Y_j}$. A similar argument shows that if $\zug{b^X_i,b^Y_j}$ is not equal to $\zug{b^X_i,0}$, then $T(j)>T(i)$.
\end{proof}

The following lemma shows that entries above the tie sub-matrix $S_i$ are equal in each column; this is depicted with equal signs in Fig.~\ref{fig:matrix}. 

\begin{lemma}\label{columns}
    Consider a bidding matrix $M_c$. Entries in a column above $S_b$ are equal, 
and entries left of $S_b$ are equal. 
\end{lemma}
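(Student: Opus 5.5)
The plan is to read off the outcome of each relevant entry and then apply the previous lemma, which says that joint bids with the same outcome give equal entries. The matrix is ordered by bid value, with rows for \CoaX's single bidder joint bids and columns for \CoaY's. Take a column indexed by $b^Y_j$, where \PLj bids $b$, so that this column meets the tie sub-matrix $S_b$. The entries of this column that lie \emph{above} $S_b$ are exactly those indexed by rows $b^X_i$ with $b_i < b$. This includes the zero row $\zug{0,b^Y_j}$, since we may assume $b>0$; when $b=0$ there is nothing above $S_0$.

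First, fix such an entry $\zug{b^X_i, b^Y_j}$ with $b_i<b$. The maximum bid is $b_{\max}=b$, and it is attained only by \PLj, since all other bids are $b_i<b$ or $0$. Hence \PLj wins the bidding with bid $b$, and tie-breaking via $T_v$ plays no role. This holds for every row above $S_b$, so all these joint bids lead to the same outcome: the same highest bid $b$ and the same winning player $j$. By the previous lemma, the corresponding intermediate vertices have the same winner in $\G_X$. Therefore all entries of the column above $S_b$ are equal, and in particular they equal $\zug{0,b^Y_j}$.

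The symmetric argument handles rows. Fix a row $b^X_i$ with bid $b$. The entries to the left of $S_b$ are indexed by columns $b^Y_j$ with $b_j<b$. In each of these, \PLi is the unique highest bidder with bid $b$, so all of them share one outcome and hence are equal.

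The only delicate point is making sure ``above $S_b$'' means strictly lower opposing bids. With bid $b_i = b$ we would have a tie, and that entry belongs to $S_b$ itself and is covered by Lemma~\ref{edges}, not by this lemma. I would also note that the budget update after the bidding depends only on the winner, the bid and $R_v$. That is already encapsulated in the previous lemma, so nothing further needs checking.
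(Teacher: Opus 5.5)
Your proof is correct and follows the paper's argument. Every entry above $S_b$ (resp.\ left of $S_b$) has the column's (resp.\ row's) player as the unique highest bidder with bid $b$, so these joint bids all have the same outcome, and the preceding lemma gives equal entries; the paper states this in one line and treats the row case as the dual, as you do.
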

\begin{proof}
    Entries above $S_b$ correspond to the bid $b$ of \CoaY and bids of members of \CoaX are $b' < b$. Thus, all the entries corresponds to the same outcome: \CoaY wins the bidding with $b$. The other case is dual.
\end{proof}

\subsection{Multi-player bidding games are locally determined}
\label{sec:proof}
We show that multi-player discrete-bidding games are locally determined, and Thm.~\ref{thm:main} follows from Thm.~\ref{localtodet}.

\begin{proposition}\label{LocallyDet}
Multi-player M\"uller discrete-bidding games are locally determined.
\end{proposition}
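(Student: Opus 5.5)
We plan to prove the proposition by a direct extremal argument on $M_c$ that uses only the structural lemmas of Sec.~\ref{sec:prop}, namely Lemma~\ref{edges}, Lemma~\ref{tie} and Lemma~\ref{columns}. First we fix notation. For a row $b^X_i$ with bid $b$, every entry in a column with bid $b'<b$ has the same outcome: \PLi wins with $b$. Denote the common value of these entries by $x(i,b)\in\set{X,Y}$. Symmetrically, for a column $b^Y_j$ with bid $b'$, denote by $y(j,b')$ the common value of the entries in rows with bid $<b'$. Both are well defined by Lemma~\ref{columns} and the lemma on equal outcomes. By Lemma~\ref{edges} and the tie-breaking order, a tie entry $\zug{b^X_i,b^Y_j}$ in $S_b$ equals $x(i,b)$ if $T(i)>T(j)$ and $y(j,b)$ otherwise.

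Consequently, row $b^X_i$ is an $X$-row iff three conditions hold: $x(i,b)=X$; $y(j,b')=X$ for every column with $b'>b$; and every tie entry in $S_b$ on that row is $X$. The characterization of a $Y$-column is dual.

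Assume $M_c$ has no $Y$-column; we construct an $X$-row. Let $\beta$ be the largest bid such that some column $b^Y_j$ with bid $\beta$ has $y(j,\beta)=Y$. Among such $j$, pick the one with maximal $T(j)$. Every column with bid $>\beta$ then has $y=X$. We expect the following claim, which we prove first: if some row has bid $b>\beta$ and $x(i,b)=X$, then it is an $X$-row. Indeed, all columns above it are $X$. Every tie entry on that row in $S_b$ equals either $x(i,b)=X$ or $y(j',b)=X$, since $b>\beta$. Hence we may assume $x(i,b)=Y$ for every row with bid $>\beta$.

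Next consider column $b^Y_j$. Rows with bid $<\beta$ give $y(j,\beta)=Y$, and rows with bid $>\beta$ give $x=Y$ by the previous step. So, since it is not a $Y$-column, some tie entry in $S_\beta$ on it must be $X$. Such an entry needs a row $b^X_i$ with bid $\beta$, $x(i,\beta)=X$ and $T(i)>T(j)$. Choose such an $i$ and show that row $b^X_i$ is an $X$-row. Entries to its left equal $x(i,\beta)=X$, and entries in columns with bid $>\beta$ are $X$ by the choice of $\beta$. For a tie entry with a column $b^Y_{j'}$ of bid $\beta$, there are two cases. If $y(j',\beta)=X$, the entry is $X$ whichever value it takes. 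If $y(j',\beta)=Y$, then $T(j')\le T(j)<T(i)$ by the maximality of $T(j)$, so \PLi wins the tie and the entry is $x(i,\beta)=X$. This uses Lemma~\ref{tie}.

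The main obstacle will be handling the degenerate boundaries cleanly. One case is when no column has $y=Y$ at all, which we treat as $\beta=-1$ so that the first claim applies to all rows, including the zero row. Another is bid-$0$ ties, where the zero row and zero column meet. There the all-zero joint bid is decided by the globally $T$-maximal player, and we must check that the single-bidder encoding of the $0$ bids, together with the tie rule, is consistent with the case analysis above.

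A further subtlety is that bid levels are bounded by budgets, so some $S_b$ have no rows or no columns. In that case the corresponding quantifications are vacuous and the argument goes through unchanged. With local determinacy established, Theorem~\ref{thm:main} follows from Theorem~\ref{localtodet}.
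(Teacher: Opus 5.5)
Your argument is correct, and it takes a genuinely different route from the paper.

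\textbf{The paper's route.} It first reduces to the case $\max_{i\in X}B_i=\max_{j\in Y}B_j$. It then inducts on the number of tie sub-matrices: it assumes the top-left matrix obtained by deleting the top bid level has an $X$-row or a $Y$-column, and tries to extend it. When extension fails, it runs an alternating ``revealing'' process. This process builds sets $P^X,P^Y$ of players with strictly increasing $T$-precedence until an $X$-row or $Y$-column appears, and it terminates by finiteness.

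\textbf{Your route.} You use the fact that every entry is determined by the pair (winner, winning bid). So the matrix is fully encoded by the values $x(i,b)$, $y(j,b)$ and the order $T$. You then make a single extremal choice: the highest level $\beta$ with some $y(j,\beta)=Y$, and the $T$-maximal such $j$. This gives the trichotomy at once:
\begin{itemize}
\item a row above $\beta$ with $x=X$ is an $X$-row;
\item otherwise column $b^Y_j$ is a $Y$-column;
\item or an $X$ tie entry in that column forces a row $b^X_i$ at level $\beta$ with $x(i,\beta)=X$ and $T(i)>T(j)$. By maximality of $T(j)$, Player~$i$ beats every $Y$-favorable column at level $\beta$, so that row is an $X$-row.
\end{itemize}

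\textbf{What your route buys.} It needs no induction and no budget-equalization step, since unequal budgets are absorbed by vacuous quantifiers. It avoids the precedence-invariant bookkeeping, and it locates the winning row or column explicitly. Linearity of $T$ enters in exactly one place, the maximality of $T(j)$. Rounding linearity is not used, matching the paper's remark. The paper's proof follows the incremental structure of the two-player local-determinacy arguments, but it is longer and harder to check.

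\textbf{One case to write out.} The degenerate case where no column of positive bid has $y=Y$ should be stated explicitly rather than via ``$\beta=-1$''. There the zero column consists of $\zug{\overline{0},\overline{0}}$ together with the values $x(i,b)$. So one of the following holds:
\begin{itemize}
\item it is a $Y$-column;
\item some $x(i,b)=X$, and that row is an $X$-row by your first claim;
\item $\zug{\overline{0},\overline{0}}$ is $X$, and the zero row is an $X$-row.
\end{itemize}
When $\beta\ge 1$, the zero row and column only contribute non-tie entries, so no further care is needed there.
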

\begin{proof}
Consider an $m$-player bidding game $\G$ and a configuration $ c = \zug{v, B_1, \ldots, B_m}$. We show that $M_c$  either has an $X$-row or $Y$-column.

\noindent{\bf Restricting to equal maximal budgets.} 
We prove for $M_c$ in which the richest players in the two coalitions have the same budget, i.e., $\max_{i \in X} B_i = \max_{j \in Y} B_j$. We argue that this implies the claim for general bidding matrices. Indeed, if this is not the case, the claim implies that the top-left sub-matrix $M'_c$ either has an $X$-row or a $Y$-column (see Fig.~\ref{Fig:MaxMutualBudget}; $M'_c$ is the shaded matrix). Assume that $\max_{i \in X} B_i > \max_{j \in Y} B_j$, thus there are rows below $M'_c$, and the other case is dual. There are three cases. First, if $M'_c$ has an $X$-row, this is an $X$-row in $M_c$. Assume $M'_c$ has a $Y$-column (as in the figure). By Lem.~\ref{columns}, the entries in a row below $M'_c$ are all equal. Second, if one of these rows is an $X$-row, it is an $X$-row in $M_c$. Third, all rows below $M'_c$ are $Y$-rows, then together with the $Y$-column in $M'_c$, we find a $Y$-column.

\begin{figure}[!htb]
  \centering 

  \begin{minipage}{0.18\textwidth}
    \centering
    \begin{tikzpicture}[scale=0.49] 
        \def\cols{4}
        \def\rows{6}
        \foreach \x in {1,...,\cols} {
            \foreach \y in {1,...,\rows} {
                \ifnum\y>2
                    \draw[fill=gray!45] (\x,\y) rectangle (\x+1,\y+1);
                    \ifnum\x=3
                        \node at (\x+0.5, \y+0.5) {\scriptsize $Y$};
                    \fi
                \else
                    \draw (\x,\y) rectangle (\x+1,\y+1);
                \fi
            }
        }
        \foreach \rowidx in {1,2} {
            \foreach \x in {1,...,3} { 
                \node at (\x+1, \rowidx+0.5) {\scriptsize $=$};
            }
        }
        \node[left] at (1, 6.5) {\scriptsize 0,0}; 
        \node[left] at (1, 5.5) {\scriptsize 0,1}; 
        \node[left] at (1, 4.5) {\scriptsize 1,0}; 
        \node[left] at (1, 3.5) {\scriptsize 2,0}; 
        \node[left] at (1, 2.5) {\scriptsize 3,0}; 
        \node[left] at (1, 1.5) {\scriptsize 4,0};

        \node[above] at (1.5, 7) {\scriptsize 0,0}; 
        \node[above] at (2.5, 7) {\scriptsize 0,1}; 
        \node[above] at (3.5, 7) {\scriptsize 1,0}; 
        \node[above] at (4.5, 7) {\scriptsize 2,0}; 
    \end{tikzpicture}
    
    \vspace{2mm}
    \caption{The matrix $M_c$ for $c = \zug{4, 1, 2, 1}$, $X = \set{1,2}$, and $Y=\set{3,4}$, has $4$ rows and $2$ columns.}
    \label{Fig:MaxMutualBudget}
  \end{minipage}\hfill 
  %
  \begin{minipage}{0.39\textwidth}
    \centering
    \begin{tikzpicture}[scale=0.45] 
        \foreach \x in {1,...,9} {
            \foreach \y in {1,...,9} {
                \ifnum\y=9 
                    \fill[gray!40] (\x,\y) rectangle (\x+1,\y+1);
                \else
                    \ifnum\x=1 
                        \fill[gray!40] (\x,\y) rectangle (\x+1,\y+1);
                    \else
                        \ifnum\x<6 
                            \ifnum\y>4
                                \fill[gray!40] (\x,\y) rectangle (\x+1,\y+1);
                            \fi
                        \fi
                    \fi
                \fi
            }
        }
        \draw[step=1cm, gray!70, thin] (1,1) grid (10,10);
        \draw[line width=1.0pt] (6,1) rectangle (10,5);

        \foreach \x in {1,...,5} {
            \node at (\x+0.5, 7.5) {\scriptsize $X$};
        }
        \foreach \y in {1,3} {
            \foreach \x in {1,...,5} {
                \node at (\x+0.5, \y+0.5) {\scriptsize $X$};
            }
            \foreach \x in {1,...,4} {
                \node at (\x+1, \y+0.5) {\scriptsize $=$};
            }
        }
        \foreach \x in {6,7,9} {
            \foreach \y in {5,...,9} {
                \node at (\x+0.5, \y+0.5) {\scriptsize $Y$};
            }
            \foreach \y in {5,...,8} {
                \node[rotate=90] at (\x+0.5, \y+1) {\scriptsize $=$};
            }
        }

        \draw[line width=1.2pt] (6,3) rectangle (7,4) node[pos=.5] {\scriptsize $Y$};
        \draw[line width=1.2pt] (7,3) rectangle (8,4) node[pos=.5] {\scriptsize $X$};
        \draw[line width=1.2pt] (9,3) rectangle (10,4) node[pos=.5] {\scriptsize $Y$};
        
        \draw[line width=1.2pt] (6,1) rectangle (7,2) node[pos=.5] {\scriptsize $Y$};
        \draw[line width=1.2pt] (7,1) rectangle (8,2) node[pos=.5] {\scriptsize $X$};
        \draw[line width=1.2pt] (9,1) rectangle (10,2) node[pos=.5] {\scriptsize $X$};

        \node[above] at (1.5, 10) {\scriptsize $\overline{0}$};
        \node[above] at (6.5, 10) {\scriptsize $b^Y_{j_3}$};
        \node[above] at (7.5, 10) {\scriptsize $b^Y_{j_1}$};
        \node[above] at (9.5, 10) {\scriptsize $b^Y_{j_2}$};
        
        \node[left] at (1, 9.5) {\scriptsize $\overline{0}$};
        \node[left] at (1, 7.5) {\scriptsize $r$};
        \node[left] at (1, 4.5) {\scriptsize $b^X_{i_3}$};
        \node[left] at (1, 3.5) {\scriptsize $b^X_{i_1}$};
        \node[left] at (1, 1.5) {\scriptsize $b^X_{i_2}$};

        \draw[-{Stealth[scale=1.0]}, thick] (7.4, 3.8) to[out=160, in=20] node[above, yshift=-0.5mm] {\scriptsize (1)} (1.6, 3.8);
        \draw[-{Stealth[scale=1.0]}, thick] (9.8, 3.5) to[out=30, in=-30, looseness=0.9] node[left, xshift=0.5mm] {\scriptsize (2)} (9.8, 9.5);
        \draw[-{Stealth[scale=1.0]}, thick] (1.5, 1.2) to[out=-40, in=-160] node[above, pos=0.4] {\scriptsize (4)} (7.5, 1.2);
        \draw[-{Stealth[scale=1.0]}, thick] (9.5, 1.75) to[out=50, in=50, looseness=0.3] node[above,pos=0.25] {\scriptsize (3)} (1.8, 1.75);
    \end{tikzpicture}
    \caption{By the induction hypothesis, the shaded matrix has an $X$-row, row $r$, or a $Y$-column. Arrows demonstrate revealing of players.}
    \label{Fig:maintheory}
  \end{minipage}\hfill 
  %
  \begin{minipage}{0.39\textwidth}
    \centering
    \begin{tikzpicture}[scale=0.49] 
        \foreach \x in {1,...,9} {
            \foreach \y in {1,...,9} {
                \ifnum\y=9 
                    \fill[gray!40] (\x,\y) rectangle (\x+1,\y+1);
                \else
                    \ifnum\x=1 
                        \fill[gray!40] (\x,\y) rectangle (\x+1,\y+1);
                    \else
                        \ifnum\x<6 
                            \ifnum\y>4
                                \fill[gray!40] (\x,\y) rectangle (\x+1,\y+1);
                            \fi
                        \fi
                    \fi
                \fi
            }
        }
        \draw[step=1cm, gray!70, thin] (1,1) grid (10,10);
        \draw[line width=1.0pt] (6,1) rectangle (10,5);

        \foreach \x in {1,...,5} {
            \node at (\x+0.5, 7.5) {\scriptsize $X$};
        }
        \foreach \y in {1,3,4} {
            \foreach \x in {1,...,5} {
                \node at (\x+0.5, \y+0.5) {\scriptsize $X$};
            }
            \foreach \x in {1,...,4} {
                \node at (\x+1, \y+0.5) {\scriptsize $=$};
            }
        }
        \foreach \x in {6,7,9} {
            \foreach \y in {5,...,9} {
                \node at (\x+0.5, \y+0.5) {\scriptsize $Y$};
            }
            \foreach \y in {5,...,8} {
                \node[rotate=90] at (\x+0.5, \y+1) {\scriptsize $=$};
            }
        }

        \draw[line width=1.2pt] (6,4) rectangle (7,5) node[pos=.5] {\scriptsize $X$};
        \draw[line width=1.2pt] (7,4) rectangle (8,5) node[pos=.5] {\scriptsize $X$};
        \draw[line width=1.2pt] (9,4) rectangle (10,5) node[pos=.5] {\scriptsize $X$};
        
        \draw[line width=1.2pt] (6,3) rectangle (7,4) node[pos=.5] {\scriptsize $Y$};
        \draw[line width=1.2pt] (7,3) rectangle (8,4) node[pos=.5] {\scriptsize $X$};
        \draw[line width=1.2pt] (9,3) rectangle (10,4) node[pos=.5] {\scriptsize $Y$};
        
        \draw[line width=1.2pt] (6,1) rectangle (7,2) node[pos=.5] {\scriptsize $Y$};
        \draw[line width=1.2pt] (7,1) rectangle (8,2) node[pos=.5] {\scriptsize $\mathbf{X}$};
        \draw[line width=1.2pt] (9,1) rectangle (10,2) node[pos=.5] {\scriptsize $X$};

        \node[above] at (1.5, 10) {\scriptsize $\overline{0}$};
        \node[above] at (6.5, 10) {\scriptsize $b^Y_{j_3}$};
        \node[above] at (7.5, 10) {\scriptsize $b^Y_{j_1}$};
        \node[above] at (9.5, 10) {\scriptsize $b^Y_{j_2}$};

	\node[left] at (1, 9.5) {\scriptsize $\overline{0}$};
        \node[left] at (1, 7.5) {\scriptsize $r$};
        \node[left] at (1, 4.5) {\scriptsize $b^X_{i_3}$};
        \node[left] at (1, 3.5) {\scriptsize $b^X_{i_1}$};
        \node[left] at (1, 1.5) {\scriptsize $b^X_{i_2}$};

        \draw[-{Stealth[scale=1.0]}, thick] (6.2, 4.8) to[out=135, in=45] node[above, yshift=0.1mm] {\scriptsize (1)} (1.5, 4.7);
        \draw[-{Stealth[scale=1.0]}, thick] (1.2, 4.1) to[out=-125, in=-115] node[above,xshift=1.5mm, yshift=-1.5mm] {\scriptsize (2)}  (7.3, 4.2);
        \draw[-{Stealth[scale=1.0]}, thick] (1.2, 4.1) to[out=-135, in=-115, looseness=1.1] node[above, xshift=5.5mm, yshift=-1.5mm] {\scriptsize (3)}  (9.2, 4.2);
    \end{tikzpicture}
    \caption{Adding $i_3$ to $P^X$. Assume $b^Y_{j_3}$ is not a $Y$-column. Arrows (1): revealing the left-most entry, and (2)-(3): every entry $\zug{b^X_{i_3}, j}$, for $j \in P^Y$, is $X$.}
    \label{Fig:nexttheory}
  \end{minipage}

\end{figure}

\smallskip
Recall that an entry in the bidding matrix is $\zug{b^X_i, b^Y_j}$ that represent a pair of joint bids in which the only (potentially) positive bids are $b_i$ and $b_j$. Further recall that a tie matrix $S_b$ corresponds to outcomes that are ties having $b_i = b_j = b$. 

\smallskip
\noindent{\bf Induction.} 
The proof for $M_c$ with equal maximal budgets is by induction on the number of tie matrices in $M_c$. The base case is when $M_c$ contains only $S_0$, namely $M_c$ contains a unique entry $\zug{0,0}$. In other words, all budgets are $0$ and the game is in fact a turn-based game; the player who controls vertex $c$ is determined according to the order $T$. 

\smallskip
\noindent{\bf Inductive step.} 
For the inductive step, assume that $M_c$ has $b$ tie matrices, $S_0, \ldots, S_{b-1}, S_b$, and assume that the top-left matrix $M'_c$ (see Fig.~\ref{Fig:maintheory}; $M'_c$  is shaded) that is obtained from $M_c$ by omitting rows $b^X_i = b^Y_j = b$, either has an $X$-row or a $Y$-column. We prove that $M_c$ either has an $X$-row or a $Y$-column. We prove for the case that $M'_c$ has an $X$-row $r$ (as in the figure), and the case of a $Y$-column is dual. By Lem.~\ref{columns}, the entries in the columns above $S_b$ are all equal. If they are all $X$, then we can extend $r$ to an $X$-row in $M_c$. Thus, consider the case that there is $j_1 \in Y$ such that entry $\zug{r, b^Y_{j_1}}$ is $Y$. 

We maintain two sets of players $P^Y \subseteq Y$ and $P^X \subseteq X$, called {\em revealed players}. 
When $i$ is added to $P^X$, we (1)~reveal that entry $\zug{b^X_i, \overline{0}}$ is $X$, and (2)~\PLi has precedence over all players in $P^Y$, i.e., $T(i) > T(j)$, for all $j \in P^Y$. Dually, when $j$ is added to $P^Y$, (1)~we reveal that entry $\zug{\overline{0}, b^Y_j}$ is $Y$ and (2)~Player~$j$ has precedence over all players in $P^X$. 

Initially, $P^Y = \set{j_1}$ and $P^X = \emptyset$. By the above, entry $\zug{r, b^Y_{j_1}}$ is $Y$, thus by Lem.~\ref{columns} so is entry $\zug{\overline{0}, b^Y_j}$ and since $P^X$ is empty, Player~$j_1$ has precedence trivially. 

\smallskip
\noindent{\bf The first cases.} 
Before continuing to the general case, we find it useful to demonstrate the ideas by describing the first additions to $P^X$ and $P^Y$. 
If Column~$b^Y_{j_1}$ is a $Y$-column, we find a $Y$-column. Otherwise, we argue that there is an entry $\zug{b^X_{i_1}, b^Y_{j_1}}$ in $S_b$ that is $X$ (see Arrow~(1) in Fig.~\ref{Fig:maintheory}). Indeed, by Lem.~\ref{columns}, since $\zug{\overline{0}, b^Y_j}$ is $Y$ so are the other entries in column $b^Y_{j_1}$ above $S_b$. 
By Lem.~\ref{edges}, entry $\zug{b^X_{i_1},b^Y_{j_1}}$ either equals $\zug{b^X_{i_1},\overline{0}}$ or $\zug{\overline{0},b^Y_{j_1}}$, and since the latter is $Y$, the former must be $X$ and $i_1$ is revealed. Moreover, Lem.~\ref{tie} implies that Player~$i_1$ has precedence over Player~$j_1$, i.e., $T(j_1) < T(i_1)$. 
We proceed with a similar argument (see Arrow~(2) in Fig.~\ref{Fig:maintheory}): if $b^X_{i_1}$ is an $X$-row, we are done, otherwise there is a column $b^Y_{j_2}$ with entry $Y$ in $S_b$, and we reveal that entry $\zug{\overline{0}, b^Y_{j_2}}$ is $Y$ and that $j_2$ has precedence over $i_1$. 

The invariant on precedence of players is used when adding $i_2$ to $P^X$. If $b^Y_{j_2}$ is a $Y$-column, we are done. Otherwise, there is an entry $\zug{b^X_{i_2}, b^Y_{j_2}}$ that is $X$. As above (see Arrow~(3) in Fig.~\ref{Fig:maintheory}), entry $\zug{b^X_{i_2}, \overline{0}}$ is $X$ and $i_2$ is revealed. The next step needs to be taken with care. Suppose that row $b^X_{i_2}$ is {\em not} an $X$-row, thus it has a $Y$ entry. The $Y$ cannot appear left of $S_b$. We argue that it is not entry $\zug{b^X_{i_2}, b^Y_{j_1}}$. Indeed (see Arrow~(4) in Fig.~\ref{Fig:maintheory}), we have already established $T(j_1) < T(i_1) < T(j_2) < T(i_2)$, and since $S_b$ corresponds to outcomes that are ties, this means that the outcomes $\zug{b^X_{i_2}, b^Y_{j_1}}$ and $\zug{b^X_{i_2}, b^Y_{j_2}}$ are the same, namely Player~$i_2$ wins. Thus, the two entries coincide in $M_c$ and since the latter is $X$, so is the former. Thus, the $Y$ entry must appear in a column $b^Y_{j_3}$ for $j_3$ that has not yet been revealed.

\smallskip
\noindent{\bf The general case.} 
Let $\hat{j} \in P^Y$ and $\hat{i} \in P^X$ denote the players with the highest precedence according to $T$ (the case of empty $P^X$ has been taken care of above). 
Suppose that $T(\hat{j}) > T(\hat{i})$, then we either find a $Y$-column or add some new $i^*$ to $P^X$. 
A dual argument shows that when $T(\hat{j}) < T(\hat{i})$, then we either find an $X$-row or add a new $j^*$ to $P^Y$. 
Consider the column $b^Y_{\hat{j}}$. If all of its entries are $Y$, we find a $Y$-column. Assume otherwise, that it has an $X$ entry. By Lem.~\ref{columns}, since entry $\zug{\overline{0}, b^Y_{\hat{j}}}$ is $Y$, so are all entries in the column above $S_b$. So, entry $X$ must appear within $S_b$, at row $b^X_{i^*}$. Since $T(\hat{j}) > T(i)$, for all $i \in P^X$ and $S_b$ corresponds to ties, it follows that the outcomes $\zug{b^X_{i}, b^Y_{\hat{j}}}$, for $i \in P^X$, coincide, and so do their entries in $M_c$. Since $\zug{b^X_{\hat{i}}, b^Y_{\hat{j}}}$ is $Y$, so are the rest of the entries. We conclude that exists $i^*$ not in $P^X$ where $\zug{b^X_{i^*}, b^Y_{\hat{j}}}$ is X. By Lem.~\ref{edges}, since $\zug{\overline{0}, b^Y_{\hat{j}}}$ is $Y$, we reveal that $\zug{b^X_{i^*}, \overline{0}}$ is $X$. By Lem.~\ref{tie}, we have $T(i^*) > T(\hat{j})$, and $T(i^*) > T(j)$, for $j \in P^Y$, is obtained inductively. 

\smallskip
\noindent{\bf Conclusion.}
The process above guarantees that in each iteration, if neither an $X$-row nor a $Y$-column is not found, then either a player is revealed from $X$ and is added to $P^X$ or a player is revealed from $Y$ and is added to $P^Y$. Since both $X$ and $Y$ are finite, the process must eventually terminate with either an $X$-row or $Y$-column being found. 
\end{proof}

\begin{remark}[General rounding mechanisms]
Interestingly, the proof of Prop.~\ref{LocallyDet} does not use the linearity of the rounding mechanism and applies for more general mechanisms. We keep the rounding mechanism linear for sake of simplicity. 
\end{remark}

\section{Existence of Nash Equilibrium in Multi-Player Bidding Games}
We introduce non-zero-sum multi-player discrete-bidding games; a game is $\G = \langle m,k,V, E, \set{T_v}_{v \in V},$ $\set{R_v}_{v \in V}, \set{\O_i}_{i \in [m]}\rangle$, where all components are as before except that there are no coalitions rather each player has an individual objective: for $i \in [m]$, the objective of \PLi is $\O_i$. A {\em profile} of strategies is a choice of strategy for each player, i.e., $P = \zug{\sigma_1, \ldots, \sigma_m}$. 
Denote by $\play(c_0, P)$, the play that is generated by $P$ from configuration $c_0$. 
For $i \in [m]$, \PLi's {\em utility} in $P$, denoted $\util_i(c_0, P)$, is $1$ if $\path(c_0, P) \in \O_i$, and otherwise it is $0$.

\begin{definition}[Nash equilibrium]
For a profile $P = \zug{\sigma_1, \ldots, \sigma_m}$ and a strategy $\sigma'_i$, we denote by $P[i \gets \sigma'_i]$, the profile obtained from $P$ by letting \PLi unilaterally switch to strategy $\sigma'_i$, namely $P[i \gets \sigma'_i] = \zug{\sigma_1, \ldots, \sigma_{i-1}, \sigma'_i, \sigma_{i+1}, \ldots, \sigma_m}$. Profile $P$ is a Nash equilibrium (NE) from configuration $c_0$ if for every $i \in [m]$ and every strategy $\sigma'_i$, we have $\util_i(c_0, P) \geq \util_i(c_0, P[i \gets \sigma'_i])$. 
\end{definition}

We show existence of NE. 

\begin{theorem}
\label{thm:NE}
For every non-zero-sum multi-player discrete-bidding game $\G$ with M\"uller objectives and every initial configuration $c_0$, there is a Nash equilibrium from $c_0$. 
\end{theorem}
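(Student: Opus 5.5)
We follow the standard construction of pure NE in turn-based games with qualitative objectives (e.g., \cite{Ume11}): a \emph{suspect-and-punish} profile built on top of the determinacy result, Thm.~\ref{thm:main}.

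\textbf{Step 1: zero-sum games.} For each player $i \in [m]$ we consider the zero-sum coalition game $\G^i$ in which \CoaX is $\set{i}$ with objective $\O_i$, and \CoaY is $[m]\setminus\set{i}$. Thm.~\ref{thm:main} applies, so for every configuration $c$ either Player~$i$ has a winning strategy for $\O_i$ from $c$, or Coalition $[m]\setminus\set{i}$ has a joint strategy $\tau^{-i}_c$ from $c$ that violates $\O_i$. Let $W_i \subseteq \C$ be the configurations from which Player~$i$ wins $\G^i$.

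\textbf{Step 2: the equilibrium path.} We fix an infinite path $\rho = c_0, c_1, \ldots$ of configurations that is consistent with some joint bid sequence, and on which, for every $i$ with $\rho \notin \O_i$, no configuration of $\rho$ lies in $W_i$. The simplest choice is to take the path generated by a profile in which every player, at every configuration in $W_i$, follows his winning strategy for $\G^i$. Concretely, I would build $\rho$ via a fixed ordering that lets the winning strategies for players whose $\G^i$ is won at $c_0$ be combined; since a winning strategy only guarantees $\O_i$ against all opponents, I would instead use the cleaner standard choice. Let $I = \set{i : c_0 \in W_i}$. Take $\rho$ to be the outcome of the profile in which each $i\in I$ plays its winning strategy from $c_0$ and the others play arbitrarily. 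Then $\rho \in \O_i$ for all $i \in I$, and for $i \notin I$ Player~$i$ gets utility $0$ but $c_0 \notin W_i$. The players in $I$ need not be punished, since they already get utility $1$.

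\textbf{Step 3: punishment.} The profile $P$ follows the bids and moves producing $\rho$ as long as all players comply. A unilateral deviation by Player~$i$ is detectable: the bidding outcome or the chosen successor differs from the prescribed one, and the deviator is identified because only one player deviated. In the bidding, we identify whoever's bid differs from the prescribed profile. From the first deviation, at the resulting configuration $c$, the others switch to $\tau^{-i}_c$. We need $c \notin W_i$ for every configuration reachable by a one-step deviation of $i \notin I$ from $\rho$.

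\textbf{Main obstacle.} The obstacle is ensuring that $\rho$ never allows Player~$i \notin I$ to step into $W_i$. The key fact is that the complement of $W_i$ is, by local determinacy (Prop.~\ref{LocallyDet}) applied to $\G^i$, a region in which Coalition $[m]\setminus\set{i}$ has a column (a joint bid revealed first) keeping every response of Player~$i$ outside $W_i$. So I would strengthen the construction: the non-$I$ players' opponents jointly bid according to these $Y$-columns while $\rho$ is generated. Since several $i \notin I$ must be handled simultaneously, I would replace Step~2 by a coalition game between $X = I$ and $Y=[m]\setminus I$ with objective ``all $\O_i$, $i\in I$, hold and the play stays outside $\bigcup_{j\notin I} W_j$ forever'', and argue via Thm.~\ref{thm:main} and the definition of $W_j$ that a compatible path exists. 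Verifying that this combined objective is achievable is where I expect the real work to lie.
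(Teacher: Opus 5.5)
Your Steps~1 and~3 agree with the paper, but the proposal has a genuine gap exactly where you flag it. You never specify on-path behaviour under which every unilateral deviation of a player $j$ who gets utility $0$ lands in a vertex that is losing for $j$ in $\G^j$.

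Your first choice of $\rho$ fails for this reason. If, at an on-path configuration $c\notin W_j$, the other players happen to bid low, Player~$j$ can deviate, win the bidding and move into $W_j$. Likewise, if $\rho$ enters $W_j$ while $j$ ``plays arbitrarily'', $j$ profits by switching to his winning strategy.

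The repair you sketch does not close the gap. The invariant ``the play stays outside $\bigcup_{j\notin I}W_j$'' has three problems:
\begin{itemize}
\item It is not sufficient. It constrains the path, not the one-step deviations from it, and what matters is the bids the others make on the path.
\item It is not necessary. Entering $W_j$ on the path is harmless once $j$ then follows a winning strategy, since such a strategy wins against any behaviour.
\item It may be unachievable, because the winning strategies of players in $I$ can force the play into some $W_j$.
\end{itemize}
Moreover, a zero-sum game between $I$ and $[m]\setminus I$ is not the right device for producing one cooperative path. The $Y$-columns of the different games $\G^j$ are joint bids of overlapping coalitions, and nothing in your argument shows that a single joint bid realises all of them at once.

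The missing idea is a purely local rule for each player that realises these $Y$-columns simultaneously, without coordination. In the paper, Player~$i$ follows a winning strategy for $\G^i$ whenever the current vertex (configuration or intermediate) is winning for him in $\G^i$. Otherwise he is \emph{hopeful}: he bids the largest $b$ such that winning the bidding at $c$ with bid $b$ leads to an intermediate vertex winning for him in $\G^i$, and bids $0$ if no such $b$ exists.

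Suppose Player~$i$ gets utility $0$ under this profile. Then the on-path play never visits a vertex winning for $i$, and every deviation is losing:
\begin{itemize}
\item If he bids $b'_i>b_i$, either he now wins with a bid outside the hopeful set, or the winner and winning bid are unchanged. In the latter case the entry, and hence the status of the intermediate vertex, is the same.
\item If $b'_i<b_i$, then $i$ did not win on the path (otherwise the next vertex would be winning for him), so the outcome is again unchanged.
\item A deviation at an intermediate vertex that is losing for $i$ can only reach losing configurations.
\end{itemize}
Thm.~\ref{thm:main} then gives the remaining players a punishing strategy from the resulting vertex. In particular, for every such $i$, the others' on-path bids form exactly the $Y$-column for $\G^i$ you were looking for. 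Neither the combined coalition game nor any invariant on $\bigcup_j W_j$ is needed.
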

\begin{proof}
Let $\G = \zug{m,k,V, E, \set{T_v}_{v \in V},\set{R_v}_{v \in V}, \set{\O_i}_{i \in [m]}}$ and initial configuration $c_0$. We describe a profile $P=\zug{\sigma_1,\dots ,\sigma_m}$ and show that it is an NE. Let $i \in [m]$. Define a zero-sum bidding game $\G^i$ by setting $X = \set{i}$ and $Y = [m] \setminus \set{i}$, and $\O = \O_i$. That is, \PLi is playing against a coalition of the rest of the players, each of which abandons their individual objective and aims to violate \PLi's objective. By Thm.~\ref{thm:main}, $\G^i$ is determined. 

We describe $\sigma_i$. Consider a configuration $c$. If $c$ is winning for \PLi in $\G^i$, then $\sigma_i$ follows a winning strategy in $\G^i$. 
Otherwise, recall the explicit concurrent game that corresponds to $\G$; biddings occur in configuration vertices, and once the players choose $\overline{b}$ at $c = \zug{v, B_1, \ldots, B_m}$, the game proceeds to an intermediate vertex $\zug{c, \overline{b}}$ that is controlled by the winner of the bidding. Next, the winner chooses a neighbor $u$ of $v$, the budgets are updated, and the game proceeds to configuration vertex $c' = \zug{u, B'_1, \ldots, B'_m}$. 

Let $I_{\text{win}}(c, i)$ be the set of intermediate vertices such that $\zug{c, \overline{b}}$ is controlled by \PLi, i.e., \PLi wins the bidding at $c$, and $\zug{c, \overline{b}}$ is winning for \PLi in $\G^i$. Intuitively, in a non-winning configuration $c$, \PLi is {\em hopeful}; he chooses the maximal bid such that if he wins the bidding at $c$, he has a winning strategy in the next configuration. 
When $I_{\text{win}}(c, i) = \emptyset$, we define $I_{\text{0-win}}(c, i)$ to be intermediate vertices that are neighbors of $c$ and not controlled by \PLi and are nonetheless winning for \PLi in $\G^i$.

We are ready to define $\sigma_i$. If $I_{\text{win}} = \emptyset$, define $\sigma_i$ to bid $0$, otherwise define $\sigma_i$ to bid $\max\set{b_i:\zug{\zug{b_1,\dots, b_i,\dots,b_m}, c} \in I_{\text{win}}(c, i)}$. 
From an intermediate vertex in $I_{\text{win}} \cup I_{\text{0-win}}$, define $\sigma_i$ to follow a winning strategy in $\G^i$. 
Finally, intuitively, if \PLj, for $j \neq i$, deviates from the action prescribed by $\sigma_j$, we define $\sigma_i$ to join the other players in a coalition that {\em punishes} \PLj; namely, all strategies follow a winning strategy for Coalition~$Y$ in $\G^j$ that violates $\O_j$. 
In App.~\ref{app:NE}, we conclude the formal definition of $\sigma_i$ and prove that $P$ is an NE by showing that if \PLi, for $i \in [m]$, attempts to increase their payoff by unilaterally deviating, the next configuration is losing for \PLi, thus the other players prevent \PLi from increasing the utility. 
\end{proof}

\stam{OLD
\begin{claim}
    Profile P is NE. 
\end{claim}

\begin{claimproof}
    Assume towards contradition that Player $i$ changed his strategy $\sigma_i$ to $\sigma_i'$. We are showing that $rew_i(P) \geq rew_i(P')$ where $P'=P[i\leftarrow\sigma_i']$. 
\proofsubparagraph*{}
    Let $out(P)=c_0,c_1,\dots $ and $out(P')=c'_0,c'_1,\dots $. Let j be the first index where $c'_j\neq c_j$. If Player i can guarantee a win from $c_j$ then he is following a win strategy and $rew_i(P) \geq rew_i(P')$. If Player i cannot guarantee a win from $c_j$, we will show that he cannot guarantee a win from $c'_j$ too so any deviation by Player i will lead to a non winning configuration, resulting in a penalty without any improvement to his reward. Since Player i cannot guarantee a win from $c_j$, he cannot guarantee a win from $c_{j-1}$ too and there are three possible scenarios:
\begin{enumerate}
    \item 
    Player i cannot guarantee a win from any intermediate state $s_j\in NE(c_{j-1})$ (and from any configuration $c'\in NE(s_j)$). In particular he cannot guarantee a win from $c'_j$.
    \item There exists at least one intermediate state $s_j\in NE(c_{j-1})$ where Player i can guarantee a win and they are owned by Player i. In this case $I_{\text{win}}$ is not empty and $\sigma_i$ proceeds with $b_{max}=max_{b_i}\{(b_1,\dots, b_i,\dots,b_m, c)\in I_{\text{win}}\}$. Since $c_j$ is not winning for Player i, at least one player must have bid more than $b_{max}$ and the game proceeded to intermediate state $s\notin I_{\text{win}}$ so Player i cannot guarantee a win from any configuration $c'\in NE(s)$. See \cref{fig:bid_of_max}.
    \item $I_{\text{win}}$ is empty but there exists at least one intermediate state $s_j=\in NE(c_{j-1})$ where Player i can guarantee a win and non of which are owned by Player $i$. See example in \cref{fig:bid_of_zero}. In this scenario Player i must have bid zero and the game proceeded to an intermediate state $s'$, Since $c_j$ is not winning for Player i and $I_{\text{win}}$ is empty, any configuration $c'\in NE(s')$ is not winning for Player i as well.

\end{enumerate}
\end{claimproof}
}

\stam{OLD examples    
\begin{figure}[htbp]
\centering
  \begin{minipage}[b]{0.46\textwidth}
    \centering
    \begin{forest}
      for tree={
        circle, draw, thick, minimum size=5mm,
        s sep=0.8cm, l sep=1.2cm, math content,
      }
      [, label={right:$B_1, B_2 = (3,5)$} 
        [
          [v_1, label={below:$1,2$}]
          [v_2, label={below:$1,2$}]
        ]
        [
          [v_3, label={below:$1,2$}]
          [v_4, label={below:$2$}]
        ]
      ]
    \end{forest}
    \captionof{figure}{Player 2 can guarantee a win under all possible strategies, including a bid of 2. On the other hand Player 1 cannot guarantee a win, but if he chooses an arbitary strategy like b=3 then it would be better if he deviates to 0 and improves his reward.}
    \label{fig:bid_of_zero}
  \end{minipage}
  \hfill 
  \begin{minipage}[b]{0.46\textwidth}
    \centering
    \begin{forest}
      for tree={
        circle, draw, thick, minimum size=5mm,
        s sep=0.8cm, l sep=1.2cm, math content,
      }
      [, label={right:$B_1, B_2 = (3,5)$}
        [
          [v_1, label={below:$1,2$}]
          [v_2, label={below:$1,2$}]
        ]
        [v_3, label={below:$2$}]
      ]
    \end{forest}
    \captionof{figure}{Player 2 can guarantee a win against all possible strategies of Player 1. On the other hand Player 1 cannot guarantee a win, but if the bids allow, he can move left and improve his reward. If the strategy of Player 2 is to bid 2 then Player 1 can guarantee a win by bidding 3.}
    \label{fig:bid_of_max}
  \end{minipage}
\end{figure}
}

\section{Complexity}
In this section we study the problem of deciding whether a coalition can win an $m$-player game $\G$. 
We note that our lower bound applies for reachability games and in the easier setting that budgets are given in unary, which is in stark contrast to two-player games: parity and mean-payoff discrete-bidding games are in NP and coNP even when the budgets are given in binary~\cite{AS25,AS26}.

\begin{theorem}
\label{thm:PSPACE-hard}
Given a reachability multi-player discrete-bidding game $\G = \langle m,k,X, Y, V, E,$ $ \set{T_v}_{v \in V},\set{R_v}_{v \in V}, \O\rangle$, where $m$ and $k$ are given in unary, two coalitions $X$ and $Y$, and an initial configuration $c_0$, deciding whether \CoaX has a winning strategy is PSPACE-hard.
\end{theorem}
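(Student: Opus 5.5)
We reduce from TQBF. Given $\Phi = \exists x_1 \forall x_2 \cdots Q_n x_n\, \psi$ with $\psi$ in CNF over clauses $C_1,\ldots,C_\ell$, we build in polynomial time a reachability bidding game with $m = O(n)$ players and $k = O(n)$ (so unary budgets suffice) such that \CoaX wins from $c_0$ iff $\Phi$ is true. The key idea is to use the budgets of \emph{many} players as a memory of the assignment. In a two-player game the state is a single number, whereas here the vector $\zug{B_1,\ldots,B_m}$ can encode exponentially many assignments while $k$ stays small. Concretely, for each variable $x_i$ we introduce two players $p_i^{T}, p_i^{F}$ (and possibly some auxiliary players). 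The value chosen for $x_i$ is recorded by which of the two is ``charged'' (budget $1$) and which is ``empty'' (budget $0$).

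\textbf{Assignment phase.} The graph is a chain of gadgets $g_1,\ldots,g_n$. In gadget $g_i$ the quantifier owner must be able to pick the value of $x_i$ and have it stored in the budgets. A turn-based move is free: at an all-zero-bid vertex, the orders $T_v$ decide the mover. We place this at a vertex where all players of the relevant coalition are poor or bids are otherwise irrelevant, and we arrange via $T_v$ that the quantifier's coalition wins ties at $0$. Storing the value requires a payment. We use a designated ``bank'' player $\beta_i$ of the appropriate coalition with budget $1$, who wins the bidding by bidding $1$. The rounding order $R_v$ at that vertex determines who receives the single coin. So we use two parallel successor vertices $u_i^T, u_i^F$, and at each a forced bid of $1$ by $\beta_i$ (forced because otherwise the opponent moves to a losing sink) pays the coin to $p_i^T$ or $p_i^F$, according to $R_{u_i^T}$ resp.\ $R_{u_i^F}$. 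Since a payment of $1$ with $m-1$ recipients gives $a=0,r=1$, exactly the top player in $R_v$ receives it. This makes the encoding exact and is where linear rounding is exploited.

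\textbf{Verification phase.} After the assignment, \CoaY (the universal side) chooses a clause $C_j$, again by a free turn-based move via tie-breaking at zero bids. \CoaX then chooses a literal $\ell \in C_j$ and moves to a test vertex $t_\ell$. At $t_\ell$ the player corresponding to $\ell$ (say $p_i^T$ if $\ell = x_i$) belongs to $X$; the complementary player $p_i^F$ and the other variable players are placed in $Y$. The test vertex has one edge to the target and one to a $Y$-sink. $T_{t_\ell}$ is chosen so that the literal player beats all others at bid $1$, while at bid $0$ \CoaY wins the tie. Hence \CoaX reaches the target iff the literal player holds a coin, i.e., iff $\ell$ is true under the stored assignment.

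The main obstacle is the coalition assignment. Each player belongs to a fixed coalition, yet the same variable player must act for $X$ at some tests and must be unable to help in others. I would handle this by duplicating players per literal occurrence and letting the bank payment at $u_i^T$ charge all ``true-copies'' of $x_i$ simultaneously. That is not possible with one coin, so instead I would run a small chain of forced payments, each transferring one coin to the next copy via rounding orders, using $O(\ell)$ extra players. I would then also verify soundness: no deviation in bids (e.g.\ \CoaY's players spending coins early, or variable players moving the token prematurely) helps either side. For this, every off-protocol outcome should lead to a sink losing for the deviator, which needs careful choice of each $T_v$ and $R_v$. I expect this soundness check, rather than the encoding itself, to be the bulk of the work.
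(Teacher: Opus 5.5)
\textbf{There is a genuine gap in your verification phase.} Your skeleton is the same as the paper's. You reduce from TQBF and give each variable two literal players. A per-variable ``bank'' holding one coin is forced to win a bidding, since otherwise the opponent moves to a losing sink. The linear rounding order $R_v$ at the chosen successor routes that single coin to the literal player of the chosen value. Then \CoaY picks a clause, and a test checks whether a true literal's player holds a coin.

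The gap is not the one you identify. As written, your test vertex $t_\ell$ does not work. After the assignment phase, \CoaX's coins sit on its literal players. At $t_\ell$ you let \CoaY ``win the tie at $0$'', i.e., \CoaY bids nothing. If $\ell$ is false, any \emph{other} coin-holding \CoaX player simply bids $1$ and wins outright, and then moves to the target. Tie-breaking plays no role here, because there is no tie.

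You try to neutralize those players by placing them in $Y$ at that vertex. That is impossible, since coalitions are fixed. Your proposed repair does not help either. Duplicated copies are still \CoaX players holding coins, and a chain of one-coin transfers leaves only the last copy charged anyway. The same flaw affects your ``free'' turn-based moves via ties at bid $0$, e.g.\ \CoaY's clause choice: whenever the other coalition holds a coin, it can bid $1$ and take the move.

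\textbf{The missing idea} is to keep one coin permanently on \CoaY's side and let the tie-breaking order do the filtering, so that all literal players can stay in \CoaX. The paper does this as follows.
\begin{itemize}
\item The forced bid at $v_0$ is paid, by rounding, to Player~$1$, who is in $Y$.
\item Player~$1$'s coin is what forces each bank $x^0_i$ to bid $1$ at $v_i$ or $v_{\neg i}$.
\item The bank's choice of successor simultaneously selects the value of the next variable, so no separate ``free move'' is needed.
\item At the clause-choice vertex $v$, ties favor Player~$1$ and rounding favors Player~$2$, who is also in $Y$. So \CoaY wins the choice of $C_j$ without losing its coin.
\item At $C_j$ the tie-breaking order is: the three literal players of $C_j$, then Player~$2$, then everyone else. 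Player~$2$ bids $1$, so only a coin-holding literal player of $C_j$ can beat it, while every other \CoaX player with a coin loses the tie.
\end{itemize}
This needs no duplication and uses $k=n+1$. Universal quantifiers are handled by moving the relevant bank into \CoaY and replacing its sink edge by an edge to $t$, so that \CoaY must win that bidding.
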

\begin{proof}
We prove NP-hardness by showing a reduction from 3SAT. In App.~\ref{app:PSPACE-hard}, we extend the ideas to a reduction from TQBF and show PSPACE-hardness. 
Consider an input formula $f = C_1 \wedge \ldots \wedge C_p$ over variables $\{x_1,\ldots,x_n\}$. 
We construct an $m=3n+3$ player bidding game $\G$ with a total budget of $k=n+1$ and an initial configuration $c_0$ such that \CoaX wins iff $f$ is satisfiable. 

We associate each variable with three players: for $i \in [n]$, we introduce players $x^0_i, x_i$, and $\neg x_i$. At $c_0$, the budget of Player~$x^0_i$ is $1$ and the other budgets are $0$. So far we have allocated a budget of $n$. We allocate the remaining coin to \PZ. Set $X = \set{0} \cup \set{x^0_i,x_i, x_{\neg i}: i \in [n]}$ and $Y = \set{1,2}$. 
Intuitively, $\G$ consists of two phases. In the first phase, \CoaX chooses an assignment to the variables; technically, they choose who Player~$x^0_i$ pays, Players~$x_i$ or $\neg x_i$, for $i \in [n]$. In the second phase, \CoaY challenges the assignment; they choose a clause $C_j$, and \CoaX wins only if $C_j$ has a literal $\ell$ such that Player~$\ell$'s budget is $1$.

\begin{figure}[htbp]
    \centering
    \begin{tikzpicture}[
        >=Stealth,
        node distance=0.4cm and 1.1cm,
        every node/.style={draw, circle, minimum size=4.5mm, inner sep=0pt, font=\footnotesize},
        every state/.style={draw, circle, minimum size=4.5mm, inner sep=0pt, font=\footnotesize},
        clause/.style={minimum size=3.5mm, font=\tiny}
    ]
    
        \node (v1) {$v_1$};
        \node (v1bar) [below=of v1, yshift=0.15cm] {$\bar{v}_1$};
        \node (v0) [left=of v1, yshift=-0.35cm] {$v_0$};
        
        \node (v2) [right=of v1] {$v_2$};
        \node (v2bar) [right=of v1bar] {$\bar{v}_2$};
        
        \node (v3) [right=of v2] {$v_3$};
        \node (v3bar) [right=of v2bar] {$\bar{v}_3$};
        
        \node (v4) [right=of v3] {$v_4$};
        \node (v4bar) [right=of v3bar] {$\bar{v}_4$};
    
        \node (s_top) [above=0.3cm of v3, xshift=-0.3cm] {$s$};
        \node (s_bot) [below=0.3cm of v3bar, xshift=-0.3cm] {$s$};

        \node (v) [right=of v4, yshift=-0.35cm] {$v$};
    
        \node (C1) [clause, right=of v, yshift=0.45cm] {$C_1$};
        \node (C2) [clause, below=0.15cm of C1] {$C_2$};
        \node (C3) [clause, below=0.15cm of C2] {$C_3$};
    
        \node (t_right) [state, accepting, right=1.3cm of C1, yshift=0.1cm] {$t$};
        \node (s_right) [right=1.3cm of C3, yshift=-0.3cm] {$s$};
    
    
        \draw[->, very thick, red] (v0) -- (v1);
        \draw[->] (v0) -- (v1bar);
    
        \draw[->] (v1) -- (v2);
        \draw[->, very thick, red] (v1) -- (v2bar);
        \draw[->] (v1bar) -- (v2);
        \draw[->] (v1bar) -- (v2bar);
        \draw[->] (v1) -- (s_top);
        \draw[->] (v1bar) to (s_bot);
    
        \draw[->] (v2) -- (v3);
        \draw[->] (v2) -- (v3bar);
        \draw[->, very thick, red] (v2bar) -- (v3);
        \draw[->] (v2bar) -- (v3bar);
        \draw[->] (v2) -- (s_top);
        \draw[->] (v2bar) to (s_bot);
        
        \draw[->, very thick, red] (v3) -- (v4);
        \draw[->] (v3) -- (v4bar);
        \draw[->] (v3bar) -- (v4);
        \draw[->] (v3bar) -- (v4bar);
        \draw[->] (v3) -- (s_top);
        \draw[->] (v3bar) to (s_bot);
    
        \draw[->] (v4) -- (s_top);
        \draw[->] (v4bar) to (s_bot);
        \draw[->] (v4) -- (v);
        \draw[->] (v4bar) -- (v);
    
        \draw[->] (v) to[bend left=42] (t_right);
        \draw[->] (v) -- (C1);
        \draw[->] (v) -- (C2);
        \draw[->] (v) -- (C3);
    
        \draw[->] (C1) -- (t_right);
        \draw[->] (C1) -- (s_right);
    
        \draw[->] (C2) -- (t_right);
        \draw[->] (C2) -- (s_right);
    
        \draw[->] (C3) -- (t_right);
        \draw[->] (C3) -- (s_right);
    
    \end{tikzpicture}
    \caption{The game constructed from the formula $f = C_1 \vee C_2 \vee C_3$ where $C_1 = (x_1 \vee \neg x_2 \vee x_4)$, $C_2 = (\neg x_2 \vee \neg x_3 \vee x_4)$ and $C_3 = (x_1 \vee x_2 \vee \neg x_3)$. Thick red edges depict a \CoaX winning strategy that corresponds to the satisfying assignment $x_1, \neg x_2, x_3, x_4$.}
    \label{fig:reduction}
\end{figure}

Formally, the vertices of $\G$ are $\set{v_0, v} \cup \set{v_i, v_{\neg i}: i \in [n]} \cup \set{C_j: j\in [p]} \cup \set{t, s}$, where $t$ is \CoaX's target and $s$ is a losing sink. Roughly (see Fig.~\ref{fig:reduction}), $\G$ is a chain that \CoaX must cross to get to $t$. 
We start with the first phase. The initial vertex is $v_0$ and its neighbors are $\set{v_1, v_{\neg 1}, s}$. Both tie and rounding are in favor of \PO. To prevent \PO from moving to $s$, \CoaX must bid $1$ to win the bidding. Since the bid cannot be divided, it is paid to \PO. We construct $\G$ so that in order to have any hope of winning, \CoaX must pay from \PZ's budget. 
For $1 \leq i < n$, the neighbors of both $v_i$ and $v_{\neg i}$ are $\set{v_{i+1}, v_{\neg i+1}, s}$, and the neighbors of $v_n$ and $v_{\neg n}$ are $\set{v, s}$. 
For $i \in [n]$, in both $v_i$ and $v_{\neg i}$, tie are in favor of Player~$x^0_i$ then \PO. Again, if \CoaX hopes to reach $t$, Player~$x^0_i$ must win the bidding with a bid of $1$, otherwise \PO (who has a budget of $1$ at this point) proceeds to $s$. 
Rounding at $v_i$ favors Player~$x_i$ and at $v_{\neg i}$ favors Player~$x_{\neg i}$. 
The key idea: for $1 < i \leq n$, Player~$x^0_{i-1}$ determines the assignment to $x_i$; upon winning the bidding, if he proceeds to $v_i$, Player~$x_i$ will be paid $1$, and if he proceeds to $x_{\neg i}$, Player~$\neg x_i$ is paid $1$. Similarly, \PZ determines the assignment to $x_1$. At $v_n$ and $v_{\neg n}$, Player~$x^0_n$ must win to avoid reaching $s$, and the game proceeds to $v$.

The second phase begins at $v$. As described above, when the game reaches $v$, the budget allocation corresponds to an assignment to $\set{x_1,\ldots, x_n}$, namely, for $i \in [n]$, variable $x_i$ is assigned $1$ or $0$ depending on whose budget is $1$, Player~$x_i$ or $\neg x_i$. 
The neighbors of vertex $v$ are $\set{t, C_1,\ldots, C_p}$, tie breaking favors \PO, and rounding favors \PT. Thus, if \CoaY hopes to win $\G$, they must win the bidding at $v$ by letting \PO bid $1$ and pay the bid to \PT. 
For $j \in [p]$, let clause $C_j$ be $C_j = \ell^j_1 \vee \ell^j_2 \vee \ell^j_3$. The neighbors of vertex $C_j$ are $\set{s, t}$, and tie breaking at vertex $C_j$ favors Players~$\ell^j_1,\ell^j_2$, and $\ell^j_3$, followed by \PT. The key idea: once the game reaches $C_j$, \CoaX wins iff the budget of one of the Players~$\ell^j_1,\ell^j_2$, or $\ell^j_3$ is $1$. Indeed, \PT bids $1$ and will win the bidding and proceed to $s$ if neither of these players bids $1$. On the other hand, if one of them bids $1$, they win and proceed to $t$. Since \CoaY chooses which $C_j$ to move to from $v$, \CoaX wins iff their choice of assignment to the variables satisfies $f$. 
\stam{OLD
We construct the following reachability game. 

Coalition $X=\{p_0, x_1, \bar{x}_1, x_2, \bar{x}_2,\dots, x_n, \bar{x}_n\}$:
For each variable $x_i$, there are two players $x_i$ and $\bar{x}_i$. 
There is an additional player, called Player $p_0$. 
Coalition $Y=\{p_{n+1}, p_{n+2}\}$.

The allocated budgets of coalition X are $n+1, 0, 0, \dots, 0$, respectively. While the budgets of coalition $Y$ are $0, 0$.

At $v_0$ tie and rounding are in favor of Player $p_{n+1}$. 
At $v_i$ tie is in favor of Player $p_0$. Rounding is in favor of Player $x_i$.
At $\bar{v}_i$ tie is in favor of Player $p_0$. Rounding is in favor of Player $\bar{x}_i$.

The game has two phases. 

\subparagraph*{}
Phase 1: 
At $v_0$ tie and rounding are in favor of Player $p_{n+1}$. 
At $v_i$ tie is in favor of Player $p_0$. Rounding is in favor of Player $x_i$.
At $\bar{v}_i$ tie is in favor of Player $p_0$. Rounding is in favor of Player $\bar{x}_i$.

At $v_0$: 
Tie is in favor of Player $p_{n+1}$. 
Player $p_0$ must win with a bid of 1. 
He pays Player $p_{n+1}$. 

Let $1 \leq i \leq n$:
For every $i$, there are two vertices, $v_i$ and $\bar{v}_i$.
Consider Player $p_{n+1}$ bids 1. 
Player $p_0$ must win, so he bids at least 1. He can't bid more than 1 since his budget will run out before turn n and Coalition $Y$ will win. So he must bid 1 and decide whether to give a coin to Player $x_i$ by moving to $v_i$, or to Player $\bar{x}_i$ by moving to $\bar{v}_i$.
Once this phase ends, Player $p_0$ has no budget left, and the allocation of budget to players corresponds to an assignment to the variables. 

End of Phase 1: 
When the game reaches vertex $v$, we are in configuration 
(v, 0, [[assignment to Z]], 1,0)

\subparagraph*{}
Phase 2: 
Coalition $Y$ wins the bidding at $v$ by bidding 1 and Player $p_{n+1}$ pays $p_{n+2}$. In other words, the money stays in the coalition. 

Player $p_{n+1}$ chooses some "clause vertex" $C_j$.
At $C_j$, coalition $X$ must win. Player $p_{n+2}$ bids 1. Some member of Coaltion $X$ must bid 1. 
Suppose $C_i$ = ($l_1$ or $l_2$ or $l_3$), where $l_j$ is some $x_i$ or $\bar{x}_i$. 
Tie breaking at $C_i$: Players $l_1$, $l_2$, $l_3$ have precedence, then Player $p_{n+2}$, then the rest of the players. 
Player $p_{n+2}$ bids 1. Thus, in order to win, Coalition $X$ must be in a situation that either Player $l_1$,$l_2$, or $l_3$ has a budget of 1 and can win against $p_{n+2}$.

\subparagraph*{}
The assignment that we get at $v$ is such that for every clause $C_i$, there is a literal that is satisfied. In other words, the assignment satisfies f. 
}
\end{proof}

We conclude this section by describing an upper bound. Recall that a configuration of an $m$-player discrete-bidding game is $\zug{v, B_1, \ldots, B_m}$ such that $v$ is a vertex and $\sum_{1 \leq i \leq m} B_m = k$. Thus, the size of the set of configuration $\C$ is $|V|$ multiplied by the number of ways to divide $k$ coins among $m$ players, which is $k+m-1 \choose m-1$~\cite{Fel68}, which can be bounded by $\min\set{m^k, (k+1)^{m-1}}$. 

\begin{theorem}
Consider an $m$-player discrete-bidding game $\G$ played on a graph with vertex set $V$, a total budget of $k$, and an objective $\O$ in a class $\Gamma$ of objectives. Given an initial configuration $c_0$ and two coalitions $X$ and $Y$, deciding whether \CoaX wins from $c_0$ reduces to solving a turn-based $\Gamma$ game on the configurations $\C$ of $\G$. In particular, when $k$ and $n$ are given in unary, B\"uchi multi-player bidding games are in EXPTIME. 
\end{theorem}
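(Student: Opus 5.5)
The plan is to reuse the turn-based game $\G_X$ from Lemma~\ref{lem:TB-det}, but instantiated on the explicit concurrent game $\G'$ of the bidding game rather than on an abstract arena. The arena of $\G_X$ has configuration vertices in $\C$, intermediate vertices recording the bid of Coalition~$X$, and further intermediate vertices recording the bidding outcome. Player~$X$ wins $\G_X$ from $c_0$ iff \CoaX wins $\G$ from $c_0$. The objective $\O \in \Gamma$ is lifted to $\G_X$ by projecting each play onto the first component of its configuration vertices. This lifting stays in the class $\Gamma$ (reachability, B\"uchi, parity, Muller): we label each configuration $\zug{v,B_1,\ldots,B_m}$ with the label of $v$, and give each intermediate vertex a neutral label. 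Concretely, the intermediate vertex gets the lowest parity priority or is marked non-accepting. It is visited only between configuration vertices, so it does not change which labels occur infinitely often.

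The key step is to keep the arena polynomial in $|\C|$. A naive encoding has a vertex for every full joint bid, which could be $k^m$ many. To avoid this, I will invoke the single-bidder lemma: each coalition may restrict to single bidder joint bids. So Player~$X$ picks a pair $(i,b_i)$ with $i \in X$ and $b_i \le B_i$, which gives at most $m(k+1)$ choices. Player~$Y$ then picks $(j,b_j)$. The outcome (winner and payment) is determined by $T_v$ and $R_v$, and the winner then chooses a successor $u$ of $v$. So the arena has $|\C|\cdot \mathrm{poly}(m,k,|V|)$ vertices.

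For the B\"uchi case I then plug in the bound $|\C| \le |V|\cdot\min\set{m^k,(k+1)^{m-1}}$. With $k$ and $m$ in unary this is $2^{\mathrm{poly}(|\G|)}$. B\"uchi turn-based games are solvable in polynomial time in the arena size, which yields EXPTIME.

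The main point needing care is the correctness of restricting Player~$Y$ to single-bidder responses as well. In $\G_X$ Player~$Y$ moves second. For a fixed $X$-bid, any $Y$ joint bid has the same outcome as its single-bidder version by the same lemma, and that lemma applies symmetrically. The rest is bookkeeping showing the arena is computable in time polynomial in its size.
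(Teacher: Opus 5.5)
Your proposal is correct and follows the route the paper has in mind: solve the turn-based game $\G_X$ of Lemma~\ref{lem:TB-det} built over the explicit configuration game, with the objective lifted through the projection of configurations to $V$, and then combine the bound $|\C| \le |V|\cdot\min\set{m^k,(k+1)^{m-1}}$ with polynomial-time solvability of turn-based B\"uchi games. Your single-bidder compression is valid but not needed: even the uncompressed arena has at most $\prod_{i}(B_i+1) \le \binom{k+m}{m}$ joint bids per configuration, so it is already only exponential in the unary input.
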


\section{Existence of Values in Mean-Payoff Multi-Player Bidding Games}
Mean-payoff objectives constitute a fundamental quantitative objectives in games~\cite{EM79,ZP96}. In particular, two-player mean-payoff bidding games have been extensively studied; continuous-bidding games exhibit intricate equivalences with a class of stochastic games called {\em random-turn games}~\cite{PSSW09} for a variety of bidding mechanisms~\cite{AHC19,AHI18,AHZ21,AJZ21}, and in discrete-bidding games, recently~\cite{AS26} show an NP and coNP algorithm to decide the optimal payoff even when the game is represented succinctly ($k$ is given in binary). 

We study the basic question on multi-player bidding games; we show existence of a {\em value} in a game between two coalitions. 
A mean-payoff discrete-bidding game is $\G = \zug{m, X, Y, k, V, E, \set{T_v}_{v \in V},\set{R_v}_{v \in V}, w}$, where the components are as in Sec.~\ref{sec:determinacy} and $w: V \rightarrow \Z$ assigns a weight to each vertex. Intuitively, weights are \CoaX's rewards (the maximizing coalition), and their objective is to maximize the long-run average rewards that an infinite play traverses. Formally,

\begin{definition}[Payoffs and value]
For an infinite play $\pi = v_0, v_1, \ldots$, the {\em payoff} is $\MP(\pi) = \lim\inf_{n \to \infty} \frac{1}{n} \sum_{i=1}^{n} w(v_i)$. Existence of a {\em value} is the counterpart of determinacy in quantitative games: for an initial configuration $c_0$, the optimal payoffs that the coalitions can guarantee are $v_X := \sup_{\sigma^X} \inf_{\sigma^Y} \MP(\play(c_0, \sigma^X, \sigma^Y))$ and $v_Y :=  \inf_{\sigma^Y} \sup_{\sigma^X} \MP(\play(c_0, \sigma^X, \sigma^Y))$, then the value exists if $v_X = v_Y$. 
\end{definition}

Our proof of existence of value follows the rough scheme of~\cite{EM79} for turn-based games; first, existence is shown in games played on trees, second, it is extended to general games via a {\em cycle-forming game} construction. We note that both steps require care in multi-player bidding games. Specifically, unlike~\cite{EM79}, our first step does not rely on a value-iteration algorithm, rather we rely on determinacy (Thm.~\ref{thm:main}), and the second step requires reasoning on the configuration graph.

\begin{lemma}
\label{lem:value-in-trees}
A value exists in mean-payoff multi-player bidding games played on a tree. 
\end{lemma}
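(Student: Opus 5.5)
The plan is to reduce the quantitative question to finitely many qualitative ones and apply determinacy (Thm.~\ref{thm:main}) to each. I read ``played on a tree'' as follows: the underlying graph $\zug{V,E}$ is a finite tree whose edges point away from the root, and each leaf $\ell$ carries a self-loop. Every play therefore descends the tree and reaches some leaf $\ell$ after at most $|V|$ steps, and then stays at $\ell$ forever. Consequently $\MP(\pi) = w(\ell)$ for the leaf $\ell$ that $\pi$ reaches. In particular, the set of possible payoffs is the finite set $W = \set{w(\ell) : \ell \text{ a leaf}}$, and this holds from every configuration regardless of the budgets.

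First I will check the easy inequality $v_X \leq v_Y$. This holds in any game: for every $\sigma^X$ and $\sigma^Y$ we have $\inf_{\sigma'^Y}\MP(\play(c_0,\sigma^X,\sigma'^Y)) \le \MP(\play(c_0,\sigma^X,\sigma^Y)) \le \sup_{\sigma'^X}\MP(\play(c_0,\sigma'^X,\sigma^Y))$.

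Next, for each $\theta \in W$, I define the reachability objective $\O_\theta$. Its targets are the leaves $\ell$ with $w(\ell) \ge \theta$. Because the game is a tree, reachability and ``eventually staying'' coincide here. By Thm.~\ref{thm:main}, each game with objective $\O_\theta$ is determined from $c_0$. Let $\theta^*$ be the largest $\theta \in W$ such that \CoaX wins $\O_\theta$ from $c_0$. This is well defined because $\theta = \min W$ is won trivially, since every play reaches some leaf. A winning strategy for $\O_{\theta^*}$ guarantees payoff at least $\theta^*$, so $v_X \ge \theta^*$.

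For the upper bound, consider two cases. If $\theta^* = \max W$, then trivially $v_Y \le \max W = \theta^*$. Otherwise, let $\theta'$ be the successor of $\theta^*$ in $W$. \CoaX does not win $\O_{\theta'}$, so by determinacy \CoaY has a strategy that violates $\O_{\theta'}$. Against every strategy of \CoaX, this strategy forces the play to a leaf of weight $< \theta'$, that is, of weight $\le \theta^*$. Hence $v_Y \le \theta^*$, and combining the bounds gives $v_X = v_Y = \theta^*$.

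The main obstacle is not the threshold argument itself but making sure the modeling is sound. First, the tree structure must guarantee that every play, even one generated by the concurrent game in which budgets change, reaches a leaf. This follows because moves are along $E$ regardless of the bids. Second, the objectives $\O_\theta$ must be genuine reachability (hence M\"uller) objectives on the configuration graph, so that Thm.~\ref{thm:main} applies. If the intended notion of ``tree'' instead has leaves that loop back or have weighted cycles, I would first normalize it to this form, using the payoff of a leaf's terminal cycle as its weight, before running the argument above.
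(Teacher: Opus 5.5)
Your proposal is correct and follows essentially the same argument as the paper: you solve the finitely many reachability games whose targets are the leaves of weight at least each threshold, apply determinacy (Thm.~\ref{thm:main}) to each, take the largest threshold \CoaX can force, and use \CoaY's winning strategy for the next threshold to get the matching upper bound. You also make explicit two points the paper leaves implicit: the threshold is well defined because the minimum weight is trivially won, and $v_X \le v_Y$ holds in general.
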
   
\begin{proof}
Consider a game $\G$ between coalitions $X$ and $Y$ that is played on a tree and an initial configuration $c_0$. Since the payoff of a play is prefix independent, we assume that weights appear only in leaves. We describe an algorithm that both finds and establishes existence of a value in $\G$. Let $w_1,\ldots, w_p$ be the unique weights in the game decreasing order, i.e., $w_i > w_{i+1}$, for $1 \leq i < p$. For $1 \leq i \leq p$, let $L_i \subseteq V$ be the set of leaves with weight at least $w_i$, i.e., $\ell \in L_i$ means $w(\ell) \geq w_i$. 
We construct and solve a sequence of reachability bidding games. For $1 \leq i \leq p$, let $\G[i]$ be a reachability bidding game that is obtained from $\G$ by setting \CoaX's target to be $L_i$. Since $\G[i]$ is determined (Thm.~\ref{thm:main}), either \CoaX or \CoaY has a winning strategy from $c_0$. 

Let $1 \leq i \leq p$, be the minimal index such that \CoaX wins $\G[i]$. We claim that $w_i$ is the value of $\G$ from $c_0$. First note that \CoaX can guarantee payoff at least $w_i$; indeed, by following a winning strategy in $\G[i]$, the game is guaranteed to reach a leaf $\ell \in L_i$ with $w(\ell) \geq w_i$. Second, we claim that \CoaY can guarantee payoff at most $w_i$. Note that if $i = 1$, \CoaX can force the highest payoff, thus the claim is trivial since it requires \CoaY to reach some leaf. If $i > 1$, by following a winning strategy $\sigma^Y$ in $\G[i-1]$, \CoaY can force not reaching $L_{i-1}$, in other words, $\sigma^Y$ forces payoff $p < w_{i-1}$. Since every play $\pi$ ends in some leaf, we have $\MP(\pi) = w_j$, for some $j \in [p]$. Thus, if $\MP(\pi) < w_{i-1}$, we have $\MP(\pi) \leq w_i$, and $\sigma^Y$ in fact forces a payoff of at most $w_i$.
\stam{OLD
    Assume $X$ is the maximizing coalition and $Y$ is the minimizing coalition. We will prove that coalition $X$ can guarantee a payoff at least as large as the value calculated by the algorithm below. Dually, coalition $Y$ can ensure a payoff no greater than this value. It follows that the value exists and matches this result.
\proofsubparagraph*{}
     Do repeatedly: Let $G_{\{v_l\}}$ be a reachability game played between $X$ and $Y$, where the objective of coalition $X$ is to reach $v_l$. From \cref{thm:main} $G_{\{v_l\}}$ is determined. If coalition $X$ can guarantee a win in $G_{\{v_l\}}$ then $v_l$ is the value of the game $G$. Meaning coalition $X$ has a strategy that guarantee a payoff equals to (or greater than) $v_l$ and coalition $Y$ cannot ensure a payoff less than $v_l$. If $X$ is not a winning coalition in game $G_{\{v_l\}}$ then let $G_{\{v_{l-1}, v_l\}}$ be the reachability game with objectives $\{v_{l-1}, v_l\}$. If $X$ has a winning strategy meaning he can guarantee reaching a value $v\in \{v_{l-1}, v_l\}$. In this case the value of $G$ is $v_{l-1}$. We continue untill we have the game $G_{\{v_0,\dots,v_l\}}$ in this case coalition $X$ cannot guarantee any payoff and the value of the game $G$ is $v_0$.

\begin{claim}
    Coalition $X$ can guarantee a payoff at least as large as the value $v_j$ calculated by the algorithm below. Dually, coalition $Y$. 
\end{claim}

\begin{claimproof}
    To prove this, assume towards contradiction that there exists a strategy $f_{X}$ for $X$ that cannot guarantee a payoff of at least $v$. This implies the existence of a strategy $f_{Y}$ for $Y$ such that the resulting play $(f_X, f_Y)$ yields a payoff strictly less than $v$. However, this contradicts the fact that $X$ is a winning coalition in game $G_{v_j, v_{j+1},\dots,v_l}$
\end{claimproof}

}
\end{proof}

The extension to general games is similar to the extension from trees to general games in turn-based games~\cite{EM79}, but proceeds on the configuration graph of $\G$. The proof can be found in App.~\ref{app:MP-general}. 

\begin{theorem}
\label{thm:MP-general}
Values exist in mean-payoff multi-player discrete-bidding games. 
\end{theorem}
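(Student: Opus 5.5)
We follow the scheme of~\cite{EM79}, but on the finite configuration graph of $\G$ rather than on $V$. The aim is to reduce the mean-payoff game to a \emph{cycle-forming game} $\G^{cf}$, played on the unfolding of the configuration graph from $c_0$. This unfolding is cut at the first repetition of a configuration and is therefore a finite tree. We then apply Lem.~\ref{lem:value-in-trees}, or rather its proof idea, to $\G^{cf}$.

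\textbf{Step 1: the cycle-forming game.} $\G^{cf}$ is played exactly like $\G$ from $c_0$, with the same biddings, tie-breaking $T_v$ and rounding $R_v$. It ends as soon as a configuration $c = \zug{v, B_1, \ldots, B_m}$ is visited for the second time. The outcome is the average weight along the closed cycle of configurations that was formed. Because $\C$ is finite, every play ends within $|\C|+1$ steps, so the game tree is finite. Its leaves are labeled by finitely many rational values $w_1 > \ldots > w_p$. The proof of Lem.~\ref{lem:value-in-trees} used only determinacy of the threshold reachability games ``reach a leaf of value $\geq w_i$'', which is Thm.~\ref{thm:main} applied to bidding games on a tree. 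It therefore applies verbatim and yields a value $w^*$ of $\G^{cf}$ from $c_0$, together with optimal strategies for both coalitions. Since every sub-tree rooted at a history is again such a game, for each history $h$ we also get a value $w^*(h)$ and locally optimal choices.

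\textbf{Step 2: memoryless-like optimality.} The central claim is that $w^*(h)$ depends only on the last configuration of $h$, and that optimal strategies in $\G^{cf}$ can be chosen to depend only on the current configuration and not on the history. In~\cite{EM79} this is shown by an induction on the number of edges, i.e.\ successively deleting edges that the optimal player does not need. Here the choices are bids and moves, and the natural analogue is to restrict the actions available at an intermediate or configuration vertex. The route we would try first is to go to the turn-based games $\G_X$ and $\G_Y$ of Lem.~\ref{lem:TB-det} at the level of the mean-payoff threshold objectives ``$\MP \geq w$'' and ``$\MP \leq w$''. Mean-payoff threshold objectives in finite turn-based games admit positional optimal strategies. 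Combined with local determinacy (Prop.~\ref{LocallyDet}) for each threshold, this should give positional strategies over configurations.

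\textbf{Step 3: transfer to $\G$.} Let $\sigma^X$ be a configuration-positional optimal strategy of \CoaX in $\G^{cf}$. Play it in $\G$. Any infinite play decomposes, by repeatedly removing the first simple cycle in the configuration sequence, into simple configuration cycles plus a bounded-length residue. Each removed cycle arises as a play of $\G^{cf}$ consistent with $\sigma^X$ from some configuration with value $w^*$, so each has average weight $\geq w^*$. Positionality makes the values of the intermediate configurations coincide along the play. Hence the $\liminf$ average is $\geq w^*$, so $v_X \geq w^*$. The dual argument gives $v_Y \leq w^*$. Since $v_X \leq v_Y$ always holds, $v_X = v_Y = w^*$.

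\textbf{Main obstacle.} The hard step is Step 2, namely that optimal play in the cycle-forming game is configuration-positional, with the same value at every occurrence of a configuration. Concurrency means that the classical turn-based edge-deletion argument does not directly apply. We expect to handle it by combining local determinacy, applied to each threshold reachability game on the tree, with the standard Ehrenfeucht–Mycielski induction performed on the configuration graph.
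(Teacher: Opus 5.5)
Your Step~1 is the paper's first step: your $\G^{cf}$ is the paper's $\CFG(\G,c_0)$, and Lem.~\ref{lem:value-in-trees} gives it a value $w^*$. The gap is Step~2, on which your Step~3 rests. You do not prove it, and its central claim is false as stated. The value $w^*(h)$ of the subtree at a history $h$ depends on all configurations on $h$, not only the last one, because they determine when the cycle-forming game stops and which cycle is scored. For example, take all budgets $0$, so the game is turn-based. Let $c_0$ have successors $a$ and $b$, let $b$ have the single successor $a$, and let \CoaX choose at $a$ between $b$ and a self-looping sink $d$. Set $w(a)=-2$ and $w(b)=w(d)=0$. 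Then $w^*(c_0\,b)=0$, since \CoaX escapes to $d$. But $w^*(c_0\,a\,b)=-1$, since the next move closes the cycle $a,b$.

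Your suggested route to configuration-positional strategies is also not a proof. Thm.~\ref{localtodet} is only available for M\"uller objectives, not mean-payoff thresholds. A positional strategy of Player~$Y$ in $\G_X$ reacts to \CoaX's revealed bid, so it is not a strategy in $\G$. A $Y$-column only keeps the play inside $Y$'s winning region and says nothing about cycle averages, just as staying in the winning region does not force progress in reachability. So the route essentially presupposes determinacy of the threshold mean-payoff games, which is what is to be proved.

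The missing idea is that positionality is not needed. The paper takes an arbitrary, history-dependent optimal strategy $\tau^X$ of $\CFG(\G,c_0)$ and transfers it by greedy cycle elimination: $\sigma^X(h)=\tau^X(h')$, where $h'$ is obtained from $h$ by repeatedly deleting the first closed cycle of configurations. The reduced history $h'$ is always a non-leaf history of the cycle-forming game that is consistent with $\tau^X$. So whenever the play closes a cycle, $h'$ extended by the new configuration is a leaf consistent with $\tau^X$, and the deleted cycle has average weight at least $w^*$. Every finite prefix of the play splits into deleted cycles plus the current $h'$, which has length at most $|\C|$. Hence the $\liminf$ of the averages is at least $w^*$, and the dual construction for \CoaY gives at most $w^*$. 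With this simulation, your Step~3 decomposition goes through unchanged and Step~2 can be dropped.
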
   
\stam{OLD
    We transform the configuration graph game $G^c$ into a much larger, memoryless First-Cycle Game (FCG) graph, denoted as $G'$. The biddings are discrete so $G'$ is finite. If we run the algorithm described above on $G'$ we get the value of the game $G$.

    We transform the configuration graph game \(G^c\) into a much larger, memoryless First-Cycle Game (FCG) graph, denoted as \(G' = (V', E')\). Every state in the new graph must keep track of two things: the current vertex and the set of all vertices visited so far (the history). A vertex in \(V^{\prime }\) is written as a pair: \((v, S)\).\(v \in V\) is the current vertex.\(S \subseteq V\) is the set of vertices already visited in this match. For every original edge \(v \xrightarrow{w} u\) in \(E\), we create transitions in \(G^{\prime }\) based on whether a cycle is forming: Scenario A: No Cycle Forms (\(u \notin S\))If \(u\) has not been visited yet, the game continues.Edge: \((v, S) \longrightarrow (u, S \cup \{u\})\)Weight: \(w' = w\) (Keep the original edge weight). Scenario B: A Cycle Forms (\(u \in S\))**If \(u\) is already in \(S\), a cycle is completed. The game must stop immediately. To represent this in a game graph, we force the token into a terminal "holding" loop that repeats the cycle's exact mathematical mean forever.The Math: Let \(C\) be the sequence of vertices in \(S\) starting from \(u\) and ending at \(v\). The cycle mean is \(\mu = \frac{\text{Sum of weights in } C}{\text{Number of edges in } C}\).Edge: \((v, S)\) points to a special dead-end terminal vertex representing that unique cycle, which loops back into itself.Weight: Every edge inside this terminal loop is assigned the weight \(\mu \)
    }

\section{Discussion and Future Work}
We study, for the first time, multi-player discrete-bidding games. On the positive side, we establish determinacy of qualitative games, a non-trivial property since bidding games are a sub-class of concurrent games, which are not in general determined. While determinacy in two-player discrete-bidding games has been previously established~\cite{DP10,AAH19}, our proof circumvents an inherent obstacle: it is not clear whether or how to extend the conventional reasoning technique in bidding games (reasoning about threshold budgets) to multi-player bidding games. Our negative complexity results serve as an additional witness to the inability to extend this technique to multi-player games. Indeed, in two-player discrete-bidding games, reasoning about thresholds leads to an NP and coNP algorithm; roughly, guess a budget threshold for each vertex and verify the guess~\cite{AS25,AS26}. But we show that multi-player discrete-bidding games are PSPACE-hard. We study, for the first time, infinite-duration non-zero-sum bidding games, and employ determinacy to show existence of Nash equilibrium in pure strategies. Finally, we initiate the study of mean-payoff multi-player bidding games; two-player mean-payoff bidding games have been extensively studied~\cite{AHC19,AHZ21,AHI18,AJZ21}.

For future work, an immediate open question is whether an NE exists in multi-player mean-payoff bidding games. Also, it is interesting to push the boundary of existence of equilibria to stronger equilibria concepts, primarily subgame perfect equilibrium (SPE), which is guaranteed to exist in qualitative turn-based games~\cite{Ume11} and whose complexity has been recently improved~\cite{BRB22}. Taking a step towards mechanism design, auction-based scheduling~\cite{AMS24} requires finding a configuration that is winning for all players, where the total budget could be determined by the algorithm. Studying rational synthesis~\cite{FKL10} in the context of bidding games is also an interesting direction for future work. 



\bibliography{ga}
\appendix

\section{Proof of Thm.~\ref{thm:NE}}
\label{app:NE}
Formally, let $\play(c_0, P) = c_0, b^0, c_1, b^1,\ldots$ be the sequence of configurations and joint bids that are generated if all players follow $P$, where $b^t = \zug{b^t_1,\ldots, b^t_m}$, for $t \geq 0$. Then, if at time $t \geq 0$, for $j \in [m]$, \PLj deviates from his prescribed strategy: at $c_t$ he bids $b'_j \neq b^t_j$ or at intermediate vertex $\zug{b^t, c^t}$ he chooses $c'_{t+1} \neq c_{t+1}$, we will argue below that the next vertices $c'_{t+1}$ and $\zug{b'^t, c_t}$, respectively, are losing for \PLj in $\G^j$. Then, $\sigma_i$ agrees with a winning strategy $\sigma^Y$ for \CoaY in $\G^j$ that violates $\O_j$. 

We claim that $P$ is an NE. Assume towards contradiction that there is $i \in [m]$ and $\sigma'_i$ such that $\util(c_0, P[i \gets \sigma'_i]) > \util_i(c_0, P)$. 
Note that $\util_i(c_0, P) = 0$ and $\util(c_0, P[i \gets \sigma'_i]) =1$. In particular, note that $\play(c_0, P)$ never visits a vertex that is winning for \PLi in $\G^i$ otherwise $\sigma_i$ coincides with a winning strategy and $\util_i(c_0, P) = 1$. 
Observe the last vertex at which $\play(c_0, P[i \gets \sigma'_i])$ differs from $\play(c_0, P)$. We show that after \PLi deviates, the next vertex is losing for \PLi in $\G^i$, thus the other players will form a punishing coalition, which contradicts $\util(c_0, P[i \gets \sigma'_i]) =1$. 
The case that the last shared vertex is an intermediate vertex $\zug{\overline{b}, c}$ is easier. Since $\zug{\overline{b}, c}$ is not winning for \PLi in $\G^i$, every neighbor $c'$ is losing for \PLi in $\G^i$, so a deviation necessarily leads to a losing vertex.  
Next, suppose that the last shared vertex is a configuration vertex $c$. 
Let $b_i$ be $\sigma_i$'s prescribed bid and \PLi deviates to bidding $b'_i$. 
Recall that $\sigma_i$ bids the maximal hopeful bid. Thus, if $b'_i > b_i$, the game necessarily proceeds to an intermediate vertex that is losing for \PLi in $\G^i$. 
The remaining case is $b'_i < b_i$. 
Recall that if $b_i > 0$ and \PLi wins the bidding, the game proceeds to a winning vertex, which we assume is not the case. We conclude by observing that if $\zug{b, c}$ that is not controlled by \PLi is losing, then so is $\zug{b[i \gets b'_i], c}$, for $b'_i < b_i$ since the coalition can follow the same winning strategy. 

\section{Proof of Thm.~\ref{thm:PSPACE-hard}}
\label{app:PSPACE-hard}
Consider an input $\varphi = Q_1 x_1 Q_2 x_2 \ldots Q_n x_n f(x_1, \ldots, x_n)$ to TQBF, where $Q_i \in \set{\exists, \forall}$, and assume wlog that $Q_1 = \exists$ otherwise add a dummy variable. We modify the construction slightly. Let $1 < i \leq n$ such that $Q_i = \forall$. Now, \CoaY determines the assignment to variable $x_i$. Recall that previously, Player~$x^0_{i-1}$ must win the $(i-1)$-th bidding, either at $v_{i-1}$ or $v_{\neg (i-1)}$, and his choice of successor determines the assignment to $x_i$. We make two modification: (1)~Player~$x^0_{i-1}$ belongs to \CoaY and (2)~we set the neighbors of $v_{i-1}$ and $v_{\neg i-1}$ to be $\set{v_i, v_{\neg i}, t}$ so that \CoaY must win the bidding. 

The proof is similar to before. Suppose $\varphi$ is in TQBF and the other direction is dual. Thus, there is a strategy $\tau_\exists$ that assigns truth values to the existentially-quantified variables, such that for every strategy $\tau_\forall$ for the universally-quantified variables, playing $\tau_\exists$ against $\tau_\forall$ leads to a satisfying assignment to $f$. Then, \CoaX wins $\G$ by following $\tau_\exists$: let $1 \leq i \leq n$ such that $Q_i = \exists$ and the game does not reach $s$ or $t$ at time $i-1$, then \CoaY's strategy corresponds to a partial assignment to each $x_j$, with $j < i$ and $Q_j = \forall$. Observe the assignment that $\tau_\exists$ assigns to $x_i$ and define that Player~$x^0_{i-1}$ win the bidding and plays accordingly. Suppose that the second phase is reached, i.e., the game reaches vertex $v$. As before, the configuration corresponds to an assignment to $x_1,\ldots, x_n$, namely for $1 \leq i \leq n$, variable $x_i$ is assigned $1$ if Player~$x_i$'s budget is $1$ and it is assigned $0$ if Player~$x_{\neg i}$'s is $1$. Moreover, the assignment coincides with the assignment generated by playing $\tau_\exists$ against some $\tau_\forall$. Since $\varphi$ is in TQBF, the assignment satisfies $f$. This means that no matter which clause $C_j$ \PO chooses, there is a literal $\ell$ in $C_j$ that is satisfied. The corresponding Player~$\ell$ in \CoaX has a budget of $1$, wins the bidding, and proceeds to $t$.

\section{Proof of Thm.~\ref{thm:MP-general}}
\label{app:MP-general}
Consider a mean-payoff bidding game $\G$ and an initial configuration $c_0$. The {\em cycle-forming game} of $\G$ from $c_0$, denoted $\CFG(\G, c_0)$, is a bidding game that is played on a tree by intuitively simulating $\G$ until a cycle of configurations is closed. Formally, each vertex in $\CFG(\G, c_0)$ corresponds to a history of $\G$. The root vertex is $c_0$. For a history $h \in \C^*$ that ends in $c$, a child of $h$ is $h \cdot c'$, for $c' \in \C$, that is a history of $\G$. That is, $c'$ is a possible successor of the bidding at $c$. Finally, $h = c_0, c_1,\ldots, c_n$ is a leaf of $\CFG(\G, c_0)$ if it is lasso shaped, namely, there is $0 \leq i < n$ such that $c_i = c_n$, and its payoff is average payoff in the cycle, namely $\frac{1}{n-i} \cdot \sum_{i \leq j < n} w(c_j)$. 

By Lem.~\ref{lem:value-in-trees}, $\CFG(\G, c_0)$ has a value $v$. We argue that the value of $\G$ from $c_0$ is $v$. We describe a strategy $\sigma^X$ that guarantees payoff at least $v$ in $\G$, and it is dual to construct a strategy for \CoaY that guarantees payoff at most $v$. Let $\tau^X$ be a \CoaX strategy in $\CFG(\G, c_0)$ that guarantees payoff at least $v$. The strategy $\sigma^X$ simulates $\tau^X$. Consider first a history $h = c_0, c_1, \ldots, c_n$ of $\G$ that does not contain cycles. Then, $h$ is a vertex in $\CFG(\G, c_0)$, and we define $\sigma^X(h) = \tau^X(h)$. If $h$ contains cycles, we omit them greedily as follows. Let $t$ be the first index such that there exists $s < t$ with $c_s = c_t$. Then, construct a history $h' = c_1, \ldots, c_s, c_{t+1}, c_{t+2}, \ldots, c_n$. If $h'$ contains a cycle, set $h = h'$, and repeat. Define $\sigma^X(h) = \tau^X(h')$. Observe that $\sigma^X$ chooses legal bids. Moreover, $h'$ is a history that is consistent with $\tau^X$, thus it guarantees that whenever a cycle is closed, its average payoff is at least $v$. It is not hard to deduce that every play that is consistent with $\sigma^X$ has mean-payoff at least $v$. 

\end{document}